\newtheorem{theorem}{Theorem}
\newtheorem{proposition}{Proposition}
\newtheorem{assumption}{Assumption}
\theoremstyle{definition}
\newtheorem{remark}{Remark}
\newtheorem{definition}{Definition}
\newtheorem{example}{Example}
\def\R{{\mathbb R}}
\def\E{{\mathbb E}}
\newcommand{\nnb}{\nonumber}
\definecolor{dblue}{RGB}{12, 56, 100}
\title{\bf A tail-shape actuarial index based on equal level relationships between Value at Risk and Expected Shortfall}
\author[a,b]{Georgios I. Papayiannis\footnote{Corresponding author e-mail: \url{gpapagiannis@unipi.gr} } }
\author[a]{Georgios Psarrakos\footnote{\url{gpsarr@unipi.gr}}}
\affil[a]{\small Department of Statistics \& Insurance Science, University of Piraeus, Piraeus, GR}
\affil[b]{\small Stochastic Modelling and Applications Laboratory, Athens University of Economics \& Business, Athens, GR}
\begin{document}
	\graphicspath{ {Figures/} } 
	\maketitle

\begin{abstract}		
	
	We introduce a new actuarial tail-shape index, the $\theta$-index, based on a probability equal level relationship between Value at Risk and Expected Shortfall. The index is defined at each tail probability level as the parameter value for which Value at Risk coincides with Flexible Expected Shortfall, that is a convex mixture of Expected Shortfall and the mean. This yields a level-dependent, scale-free measure of upper tail behaviour. We study basic theoretical properties of the $\theta$-index and introduce a partial order for comparing loss distributions, characterized by the monotonicity of right-tail spread ratios. Additionally, the index leads to characterizations of the tail behaviour of a loss distribution as consistent to the generalized Pareto model, through a direct connection to the mean excess function. Moreover, we derive Euler risk contributions for the $\theta$-index and use probability equal level relationships to compute Value at Risk allocations in a more stable way. Finally, the $\theta$-index is examined as a diagnostic tool for distinguishing tail regimes and its capabilities are illustrated using the Danish fire insurance dataset.
\end{abstract}
	
\noindent {\bf Keywords:} 
	Expected Shortfall;
	Value at Risk; 
	level-dependent risk measures;
	tail shape;
	risk allocation;
	mean-excess function;


\section{Introduction}\label{Sec-1}

Quantifying tail risk is a central task in actuarial science and risk management. Measures such as Value at Risk (VaR) and Expected Shortfall (ES) are widely used to assess the severity of extreme losses and play an important role in regulation and practice. While these measures provide information at fixed probability levels, they offer limited insights into how tail behaviour evolves as the conditioning level changes. Understanding this local evolution of tail severity is important in standard actuarial operations such as reinsurance design, retention selection, and risk capital allocation.

A broad literature addresses the assessment of tail risk and tail behaviour in actuarial science and risk management. Standard risk measures such as VaR and ES summarize tail severity at fixed probability levels \citep{embrechts2013modelling, mcneil2015quantitative}. Distortion and spectral risk measures provide flexible aggregation mechanisms grounded in stochastic orders and weighted premium principles \citep{denneberg1994non, wang1996premium, wang1998actuarial}. From an asymptotic perspective, extreme value theory and regular variation techniques yield well-established characterizations of tail heaviness at very high probability levels \citep{de2006extreme, resnick2007heavy}. In applied actuarial analysis, diagnostics based on the mean excess function and related tail index estimators are widely used to explore tail behaviour across thresholds \citep{albrecher2017reinsurance,beirlant2006statistics}. Closely related level-dependent tail indices include the Expected Proportional Shortfall \citep{belzunce2012comparison}, which is based on ratios of tail excess quantities. More recently, probability equivalent level relationships between VaR and ES have been studied by characterizing the probability levels at which these measures coincide \citep{li2023pelve}. 

Building on these developments, this paper adopts a complementary perspective on tail behaviour. While level-dependent indices and probability equivalent level relationships provide useful tools for tail comparison, many classical tail diagnostics are motivated by asymptotic arguments and are primarily informative at extreme probability levels. Their interpretation at moderate probability levels, which are often relevant for actuarial decisions, can therefore be delicate. In this paper, we focus on a non-asymptotic, level-based description of tail behaviour through a representation that links VaR to a flexible mixture of ES and the mean at the same probability level. This leads to the $\theta$-index, a level-dependent tail-shape measure that summarizes how conditional tail severity evolves beyond a given quantile. The index is designed to be interpretable, scale-free, and directly connected to ordering, allocation, and tail diagnostics.

The remainder of the paper is organized as follows. Section \ref{Sec-2} introduces the main concepts and presents the definition of the $\theta$-index. In Section \ref{Sec-3}, basic theoretical properties of the $\theta$-index are studied and a partial order is introduced for comparing loss distributions. In Section \ref{Sec-4}, implications for Euler risk allocations are studied, while in Section \ref{Sec-5} a tail diagnostic framework is developed for distinguishing tail regimes based on $\theta$-index related measures. Finally, in Section \ref{Sec-6} the proposed methodology for common loss models is illustrated while in Section \ref{Sec-7} an empirical tail analysis to the Danish fire insurance dataset is performed employing the $\theta$-index framework. 

\section{Main concepts and the $\theta$-index}\label{Sec-2}

Weighted premiums and mixing representations are widely used in actuarial tail modelling and provide flexible mechanisms for combining tail-conditional quantities. Recently,  \cite{psarrakos2024flexibility} introduced the class of flexible weighted premiums of the form
\begin{equation}\label{weighted-premia}
	\Pi_w(X;\theta) = \frac{ \E[X \, w_{\theta}(X) ] }{ \E[w_{\theta}(X)] },
\end{equation}  
where the employed weighting function $w_{\theta}(x) = w(x) + \theta$, depends on a non-decreasing weighting function $w(x)$ and an extra parameter $\theta \in (0,\infty)$ referred to as the flexibility parameter. For the indicator weighting function $w(x) = {\bf 1}\{ x > \mbox{VaR}_p(X) \}(x)$ \citep{furman2008weighted}, this construction yields a mixture of the tail conditional expectation and the mean, with mixing proportions determined jointly by the probability level $p$ and the parameter $\theta$ expressed by 
\begin{equation}\label{mixture}
	\frac{1-p}{1-p+\theta} \, \mbox{ES}_p(X) + \frac{\theta}{1-p+\theta} \, \mathbb{E}[X]. 
\end{equation}	
This mixing perspective also underlies the class of coherent risk measures known as ES/$\mathbb{E}$ - mixtures \citep{embrechts2021bayes}, as well as the recently proposed Flexible Conditional Tail Expectation (FCTE) or Flexible Expected Shortfall (FES), where the relative weight assigned to ES and the mean depends on the pair $(p,\theta)$. These frameworks illustrate that mixture representations linking ES, VaR and the mean, encode structural information about the behaviour of the upper part of the distribution. In particular, the FES functional leads to a natural level-dependent relationship between ES and VaR that can be characterised explicitly.

For what follows we consider random variables with finite and non-negative mean with continuous and strictly increasing distribution function on its support. In particular, on a probability space $(\Omega, \mathcal{F}, \mathbb{P})$ we focus on the class of random variables
\begin{equation*}
	\mathcal{X} := \{ X:\Omega \to \R \,\, : \,\, \E[|X|]<\infty, \,\,\, \E[X] \geq0, \,\,\, F_X \mbox{ continuous \& strictly increasing } \} \subset L^1
\end{equation*}
with $F_X(\cdot)$ denoting the distribution functions of $X$.  Moreover, we define the upper tail level set of $X$ as
\begin{equation}
	D_X := \{ p \in (0,1) \,\, : \,\, \mbox{VaR}_p(X) > \E[X] \geq 0 \}.
\end{equation} 
Throughout the paper, results that involve differentiation with respect to the probability level $p$ are derived under the following additional assumption.
\begin{assumption}[Regularity assumption]\label{ass-1}
	 The distribution function $F_X$ admits a density $f_X$ which is positive at the relevant quantile levels, i.e.
$$ f_X(\mathrm{VaR}_p(X))>0, \qquad p \in D_X.$$
\end{assumption}
We recall that under the aforementioned conditions for $F_X$, the VaR coincides with the quantile function of $X$, while the ES and CTE risk measures are identical.

\begin{proposition}\label{prop-1}
	For a random variable $X \in \mathcal{X}$ and any fixed $\theta \in(0,\infty)$, there exists a unique $p_{\theta} \in (0,1)$ such that
	$$ \mbox{FES}_{p_{\theta}}(X;\theta) = 	\frac{1-p_{\theta}}{1-p_{\theta}+\theta} \, \mbox{ES}_{p_{\theta}}(X) + \frac{\theta}{1-p_{\theta}+\theta} \, \mathbb{E}[X] = \mbox{VaR}_{p_{\theta}}(X)$$
	where $p_{\theta} := \arg\max_{p \in(0,1)} \mbox{FES}_p(X)$.
\end{proposition}

\begin{proof}	
	The result is an immediate consequence of Lemma 3.5 in \cite{psarrakos2024flexibility}.
\end{proof}

We stress that the equality established above concerns the direct relation of VaR and FES rather than VaR and ES. The unique maximizer $p_{\theta}$ identifies the probability level at which the FES functional coincides with the VaR. This yields a level-dependent alignment between FES and VaR, governed by the parameter $\theta$. Building on the mixture representation underlying FES and on the VaR-FES alignment at the maximizing level $p_{\theta}$, we obtain a level-dependent relationship between ES and VaR that can be inverted to define a tail-shape index. 

\color{black}


\begin{definition}[$\theta$-index]\label{theta-index}
	Assume a random variable $X \in \mathcal{X}$. For any level $p \in D_X$, its $\theta$-index is defined by the function
	\begin{equation}\label{theta}
		\theta_p(X) := (1-p) \frac{ \mbox{ES}_p(X) - \mbox{VaR}_p(X) }{ \mbox{VaR}_p(X) - \E[X] }  
		= (1-p) \frac{ e_X(\mbox{VaR}_p(X)) }{ \mbox{VaR}_p(X) - \E[X] }
	\end{equation}  
	where 
	$e_X(z) = \E[X-z|X>z]$ denotes the mean excess function of $X$. 
\end{definition}

Conceptually, the $\theta$-index identifies at each probability level, the parameter value for which the Flexible Expected Shortfall coincides with VaR, as described in Proposition \ref{prop-1}. Definition \ref{theta-index} provides a more direct interpretation of the index in terms of tail behaviour. From equation \eqref{theta}, $\theta_p(X)$ measures the excess of the tail conditional expectation above $\mbox{VaR}_p(X)$, scaled by the distance between $\mbox{VaR}_p(X)$ and the mean. As a result, $\theta_p(X)$ quantifies how tail severity increases as the conditioning level moves further into the tail. In the next sections, we study basic analytical properties of the $\theta$-index and show how monotonicity, curvature, and transformation properties provide additional insight into the structure of the upper tail.
		
\color{black}	

\section{Theoretical properties of the $\theta$-index and characteri\-zation results}\label{Sec-3}

In this section we develop theoretical properties of the $\theta$-index. We first establish its invariance under affine transformations, which allows comparisons of tail behaviour across risks with different units, scales, or contractual shifts. We then study the effect of convex and concave transformations, derive monotonicity properties, and introduce a nonparametric estimator together with its convergence properties. Building on these results, we define a partial order induced by the $\theta$-index accompanied by a comparison theorem across different models. The section concludes with a characterisation result of the upper tail behaviour through the generalized Pareto distribution.

We begin with the affine invariance property of $\theta_p(X)$ which is stated in the following result.
	
\begin{proposition}\label{prop-ls-inv}
	Assume $X \in \mathcal{X}$. Given any $p \in D_X$, $\theta$-index satisfies the property of affine invariance, i.e.
	$$ \theta_p(\alpha \, X + \beta) = \theta_p(X) $$
	for any $\alpha >0 $ and $\beta \in \R$.  
\end{proposition}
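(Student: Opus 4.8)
The plan is to reduce the claim to two elementary equivariance facts — that the quantile and the mean transform affinely under $x\mapsto \alpha x+\beta$ when $\alpha>0$ — together with the positive homogeneity of the map $x\mapsto (x)_+$.

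First I would record the building blocks. For $\alpha>0$ and $\beta\in\R$ one has $F_{\alpha X+\beta}(x)=F_X((x-\beta)/\alpha)$, hence $\mbox{VaR}_p(\alpha X+\beta)=\alpha\,\mbox{VaR}_p(X)+\beta$; this is exactly where strict positivity of $\alpha$ matters, since for $\alpha<0$ the $p$-quantile would turn into a $(1-p)$-type quantity. Linearity of expectation gives $\mathbb{E}[\alpha X+\beta]=\alpha\,\mathbb{E}[X]+\beta$. Subtracting,
\[
\mbox{VaR}_p(\alpha X+\beta)-\mathbb{E}[\alpha X+\beta]=\alpha\big(\mbox{VaR}_p(X)-\mathbb{E}[X]\big),
\]
which in particular shows that $p\in D_{\alpha X+\beta}$ if and only if $p\in D_X$, so the $\theta$-index of the transformed risk is defined on precisely the same set of levels.

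Next I would compute the numerator of \eqref{theta} for $\alpha X+\beta$. Plugging in $\mbox{VaR}_p(\alpha X+\beta)=\alpha\,\mbox{VaR}_p(X)+\beta$, the constants $\beta$ cancel inside the positive part and positive homogeneity of $(\cdot)_+$ extracts the factor $\alpha$:
\begin{align*}
\pi_{\alpha X+\beta}\!\big(\mbox{VaR}_p(\alpha X+\beta)\big)
&=\mathbb{E}\big[(\alpha X+\beta-\alpha\,\mbox{VaR}_p(X)-\beta)_+\big]\\
&=\alpha\,\mathbb{E}\big[(X-\mbox{VaR}_p(X))_+\big]=\alpha\,\pi_X\!\big(\mbox{VaR}_p(X)\big).
\end{align*}
Dividing this by the denominator computed above, the common factor $\alpha$ cancels — legitimately, because $\mbox{VaR}_p(X)-\mathbb{E}[X]>0$ for $p\in D_X$ — and one reads off $\theta_p(\alpha X+\beta)=\theta_p(X)$.

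There is essentially no obstacle here: the only points deserving a word of care are insisting $\alpha>0$ (so the quantile is affine-equivariant rather than reflected) and the nonvanishing of the denominator on $D_X$. In fact the argument is even more transparent through the first expression in \eqref{theta}: since
\[
\frac{(\alpha X+\beta)-\mbox{VaR}_p(\alpha X+\beta)}{\mbox{VaR}_p\big((\alpha X+\beta)-\mathbb{E}[\alpha X+\beta]\big)}
=\frac{\alpha\big(X-\mbox{VaR}_p(X)\big)}{\alpha\,\mbox{VaR}_p\big(X-\mathbb{E}[X]\big)},
\]
the random variable inside the expectation is literally unchanged by the transformation, and the identity follows at once.
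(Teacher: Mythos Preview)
Your proof is correct and follows essentially the same idea as the paper's: both arguments exploit the affine equivariance of the building blocks and cancel the common factor $\alpha$. The only cosmetic difference is that the paper works with the representation $\theta_p(X)=(1-p)\dfrac{\mbox{ES}_p(X)-\mbox{VaR}_p(X)}{\mbox{VaR}_p(X)-\mathbb{E}[X]}$ and invokes the location--scale equivariance of $\mbox{ES}_p$, whereas you use the stop-loss form $\pi_X(\mbox{VaR}_p(X))$ directly; since $(1-p)(\mbox{ES}_p(X)-\mbox{VaR}_p(X))=\pi_X(\mbox{VaR}_p(X))$, the two computations are the same in disguise.
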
	
	
\begin{proof}
	Employing the affine invariance properties of the mean, VaR and ES we obtain the following:
	\begin{eqnarray*} 
		\theta_p(\alpha \, X + \beta)  
		&=& (1-p)  \frac{\mbox{ES}_p(\alpha \, X + \beta) - \mbox{VaR}_p(\alpha \, X + \beta)}{\mbox{VaR}_p(\alpha \, X + \beta) - \mathbb{E}[\alpha \, X + \beta]}  \\
		&=&  (1-p) \frac{\alpha \,  \mbox{ES}_p(X) + \beta  - (\alpha \, \mbox{VaR}_p(X) + \beta)}{\alpha \, \mbox{VaR}_p(X) + \beta  - (\alpha \, \mathbb{E}[X] + \beta)}  
		=  \theta_p(X)   
	\end{eqnarray*}
\end{proof}

Our next result, concerns the effects of convex and concave transforms of the underlying random variable on $\theta$-index. 

\begin{proposition}\label{prop-3}
	Assume $X \in \mathcal{X}$. For any $p \in D_X$, $\theta$-index satisfies the following properties:
	\begin{itemize}
		\item[(a)] If the function $g:\R \to \R$ is convex and increasing, it holds that $\theta_p( g(X) ) \geq \theta_p(X)$.
		\item[(b)] If the function $g:\R \to \R$ is concave and increasing, it holds that $\theta_p(g(X)) \leq \theta_p(X)$.
	\end{itemize}
\end{proposition}

\begin{proof}
	
	(a). Since $g$ is an increasing and convex function, it has an increasing derivative, i.e. for any $y<x$ it holds that $g'(y) \leq g'(x)$. Combining this with the Mean Value Theorem (MVT) we have that 
	\begin{eqnarray*}
		g(x) - g(y) &=& g'(\xi) \, (x-y), \,\,\, \xi \in [y,x]  \\
		 &\geq& g'(y) (x-y), \,\,\, x \geq y
	\end{eqnarray*} 
	since necessarily $\xi \geq y$.
	Taking conditional expectation to both sides and setting $y = VaR_p(X)$ for any $p \in D_X$, we get
	\begin{eqnarray*}
		e_{g(X)}(g(\mbox{VaR}_p(X))) &:=& \E[g(X) - g(\mbox{VaR}_p(X)) | g(X) > g(\mbox{VaR}_p(X))]\\ 
		&\geq& g'( \mbox{VaR}_p(X) ) \E[X - \mbox{VaR}_p(X) | X > \mbox{VaR}_p(X)] \\
		&=:& g'( \mbox{VaR}_p(X) ) \, e_X(\mbox{VaR}_p(X)).
	\end{eqnarray*}
	Employing Jensen's inequality, the fact that $g$ is convex and the property $g(\mbox{VaR}_p(X)) = \mbox{VaR}_p(g(X))$, we get the inequality
	\begin{equation*}
		g(\mbox{VaR}_p(X)) - g(\E[X]) \geq g(\mbox{VaR}_p(X)) - \E[g(X)].
	\end{equation*}
	Then, using the MVT we have
	\begin{eqnarray*}
		g(\mbox{VaR}_p(X)) - g(\E[X]) &=& g'(\xi) ( \mbox{VaR}_p(X) - \E[X]), \,\,\, \xi \in [ \E[X], \mbox{VaR}_p(X)]  \\
		&\leq& g'(\mbox{VaR}_p(X)) \, ( \mbox{VaR}_p(X) - \E[X] ), \,\,\, \mbox{VaR}_p(X) > \E[X]
	\end{eqnarray*}
	since necessarily $\xi \leq \mbox{VaR}_p(X)$. Combining with the previous result and the property $g(\mbox{VaR}_p(X)) = \mbox{VaR}_p(g(X))$ since $g$ increasing, we get that 
	\begin{eqnarray*}
		  \mbox{VaR}_p(g(X)) - \E[g(X)] &\leq& g'(\mbox{VaR}_p(X)) \, ( \mbox{VaR}_p(X) - \E[X] ) \iff \\
		  \frac{1}{ \mbox{VaR}_p(g(X)) - \E[g(X)] } &\geq& \frac{1}{ g'(\mbox{VaR}_p(X))  } \, \frac{1}{ \mbox{VaR}_p(X) - \E[X] },
	\end{eqnarray*}	
	and multiplying with the obtained inequality for mean excess loss functions, we reach the relation $\theta_p( g(X) ) \geq \theta_p(X)$ which concludes the proof. \\ \\
	(b) Taking into account the property that if $g$ is increasing and concave, it has decreasing derivative, i.e. $g'(y) \geq g'(x)$ for any $y < x$, and following the same steps as in (a) we reach the stated result.
\end{proof}

Proposition \ref{prop-3} formalises how the $\theta$-index responds to increasing transformations of losses that alter the relative weight of extreme outcomes. In particular, convex increasing transformations, such as non-linear indexation clauses or convex catastrophe loadings, place greater emphasis on large claims and lead to larger values of $\theta_p$ comparing to the original loss. By contrast, concave increasing transformations associated with contractual features, such as policy limits or caps, compress large losses relative to moderate ones and result in smaller values of $\theta_p$. These properties align with standard actuarial interpretations of how convexity and concavity affect tail-sensitive risk measures such as VaR and ES.

Next, we provide the non-negativity and monotonicity (decreasing) property of $\theta$-index for $p \in D_X$. 	 
\begin{proposition}\label{theta-decreasing}
	Assume $X \in \mathcal{X}$. The $\theta$-index is positive and strictly decreasing for any $p \in D_X$.
\end{proposition}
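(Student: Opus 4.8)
The plan is to work from the integral representation of the $\theta$-index that already surfaced in the proof of Theorem~\ref{theorem-1}. Recalling that the stop-loss transform satisfies $\pi_X(z) = \mathbb{E}[(X-z)_+] = \int_z^{\infty}\overline{F}(x)\,dx$, I would first rewrite, for every $p \in D_X$,
$$
\theta_p(X) \;=\; \frac{\pi_X(\mbox{VaR}_p(X))}{\mbox{VaR}_p(X) - \mathbb{E}[X]} \;=\; \frac{\displaystyle\int_{\mbox{VaR}_p(X)}^{\infty} \overline{F}(x)\,dx}{\mbox{VaR}_p(X) - \mathbb{E}[X]}.
$$
Non-negativity is then immediate: the numerator is $\mathbb{E}[(X-\mbox{VaR}_p(X))_+]\ge 0$ (indeed strictly positive for a continuous risk, since $\overline{F}(\mbox{VaR}_p(X))=1-p>0$), while by the very definition of $D_X$ the denominator $\mbox{VaR}_p(X)-\mathbb{E}[X]$ is strictly positive; hence $\theta_p(X)>0$.

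For the monotonicity I would set $q:=\mbox{VaR}_p(X)=F^{-1}(p)$ and $\mu:=\mathbb{E}[X]$, and view $\theta_p(X)=g(q)$ where $g(z):=h(z)/(z-\mu)$ with $h(z):=\int_z^{\infty}\overline{F}(x)\,dx$, considered on $z\in(\mu,\operatorname{ess\,sup}X)$, which is exactly the range of $q$ as $p$ ranges over $D_X$. Two observations close the argument. First, $h$ is non-increasing (with $h'(z)=-\overline{F}(z)\le 0$) and strictly positive on that range, whereas $z\mapsto z-\mu$ is strictly increasing and positive there; consequently $g$ is non-increasing, by the elementary chain $g(z_1)=h(z_1)/(z_1-\mu)\ge h(z_2)/(z_1-\mu)\ge h(z_2)/(z_2-\mu)=g(z_2)$ for $\mu<z_1<z_2$. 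Second, since $p\mapsto \mbox{VaR}_p(X)=F^{-1}(p)$ is non-decreasing, $D_X=\{p:\mbox{VaR}_p(X)>\mu\}$ is an interval (of the form $(F(\mu),1)$ under the standing assumptions), on which $p\mapsto q$ is non-decreasing. Composing, $p\mapsto\theta_p(X)=g(\mbox{VaR}_p(X))$ is non-increasing on $D_X$, i.e.\ decreasing; it is strictly decreasing wherever $F$ is strictly increasing, which is the generic situation in the absolutely continuous setting.

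Alternatively — and perhaps more transparently — one can differentiate directly: using $dq/dp=1/f(q)$ and $\overline{F}(q)=1-p$, a short computation gives
$$
\frac{d}{dp}\,\theta_p(X) \;=\; -\,\frac{1}{f(q)\,(q-\mu)^2}\Big[(1-p)(q-\mu)+\int_q^{\infty}\overline{F}(x)\,dx\Big]\;<\;0 ,
$$
the sign being obvious because every term inside the bracket is positive for $p\in D_X$. I do not expect a genuine obstacle in this proposition; the only delicate point is the regularity behind the differentiation (existence and positivity of the density at the quantile), which is why I would present the monotone-ratio argument as the primary route, since it uses nothing beyond the integral representation of $\theta_p(X)$ from Definition~\ref{theta-index} and the monotonicity of the quantile function, and then mention the derivative computation as a quick confirmation.
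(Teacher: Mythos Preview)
Your proposal is correct. The derivative computation you give as the ``alternative'' is in fact the paper's route: the paper differentiates $\theta_p(X)=(1-p)\dfrac{\mbox{ES}_p(X)-\mbox{VaR}_p(X)}{\mbox{VaR}_p(X)-\mathbb{E}[X]}$ with respect to $p$ and arrives at
\[
\frac{d}{dp}\theta_p(X)=-\frac{1-p+\theta_p(X)}{\bigl(\mbox{VaR}_p(X)-\mathbb{E}[X]\bigr)\,f(\mbox{VaR}_p(X))}\le 0,
\]
which is exactly your bracketed expression after substituting $\theta_p(X)=\int_q^\infty\overline{F}(x)\,dx\big/(q-\mu)$ and clearing the denominator.

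What is genuinely different is your primary route, the monotone-ratio argument: writing $\theta_p(X)=h(q)/(q-\mu)$ with $h$ non-increasing and $q-\mu$ strictly increasing and positive, then composing with the non-decreasing quantile map. This bypasses the differentiation entirely and hence avoids the implicit regularity the paper uses (existence and positivity of $f$ at the quantile, so that $dq/dp=1/f(q)$ makes sense). The paper's calculation is more explicit---it identifies the derivative in closed form, which is occasionally useful elsewhere---but your ratio argument is cleaner for the bare monotonicity claim and works under the minimal standing assumption $X\in L^1$ with continuous distribution.
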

\begin{proof}
	It is easy to verify that $\theta_p(X) \in (0,\infty)$ by definition for any $p \in D_X$. Moreover, for any $p \in D_X$ we get that 
	\begin{eqnarray*}
		\frac{d}{dp}\theta_p(X) &=& \frac{d}{dp} \left[ (1-p) \frac{\mbox{ES}_p(X) - \mbox{VaR}_p(X)}{ \mbox{VaR}_p(X) - \mathbb{E}[X] } \right] \\
		&=& -\frac{\mbox{ES}_p(X) - \mbox{VaR}_p(X)}{\mbox{VaR}_p(X) - \mathbb{E}[X]} + (1-p)\left( \frac{\frac{d}{dp}[\mbox{ES}_p(X) - \mbox{VaR}_p(X)]}{ \mbox{VaR}_p(X) - \mathbb{E}[X] } \right. \\
		&\,\,\,& \left.- \frac{(\mbox{ES}_p(X) - \mbox{VaR}_p(X)) \frac{d}{dp}[\mbox{VaR}_p(X)
			 - \mathbb{E}[X]] }{ (\mbox{VaR}_p(X) - \mathbb{E}[X])^2 } \right) \\
		 &=& -\frac{\theta_p(X)}{1-p} + \frac{\theta_p(X)}{1-p}  - \frac{1-p}{f_X(\mbox{VaR}_p(X))} \left( 1 +  \frac{ \mbox{ES}_p(X) - \mbox{VaR}_p(X) }{ \mbox{VaR}_p(X) - \mathbb{E}[X] } \right) \\
		 &=& -\frac{1 - p + \theta_p(X)}{ \mbox{VaR}_p(X) - \mathbb{E}[X] } \frac{1}{f_X(\mbox{VaR}_p(X))} < 0
	\end{eqnarray*}
	since $\theta_p(X) \in (0,\infty)$, $\mbox{VaR}_p(X) > \mathbb{E}[X]$ and $f_X(\mathrm{VaR}_p(X)) > 0$ for any $p \in D_X$. Therefore, the strict negativity of $\frac{d}{dp}\theta_p(X)$ indicates that $\theta_p(X)$ is strictly decreasing for any $p \in D_X$. 
\end{proof}

Wherever there is not adopted any particular parametric model for the description of $X$, an empirical estimator for $\theta_p(X)$ may be employed.  Let us consider the empirical version of $\theta_p(X)$ which is denoted by $\hat{\theta}_{p,n}(X)$ to emphasize the dependence on the sample size $n$. Similarly to \cite{belzunce2012comparison}, the empirical estimator for a given iid sample $X_1,X_2,...,X_n$ from the distribution of $X$ is given by
\begin{equation}\label{theta-emp}
	\widehat{\theta}_{p,n}(X) = \frac{(1-p)}{ \hat{x}_{p,n} - \bar{X} } \sum_{i=1}^n \frac{ (X_i - \hat{x}_{p,n}) I_{( \hat{x}_{p,n} ,\infty)}(X_i) }{ \sum_{j=1}^n I_{(\hat{x}_{p,n},\infty)}(X_j) }.
\end{equation}
Here, $\hat{x}_{p,n} := X_{([n(1-p)]+1)}$ denotes the empirical estimator of the quantile function at level $p$ (equivalently, the empirical $\mathrm{VaR}_p(X)$), and $\bar{X}$ denotes the sample mean. Throughout, we consider the natural plug-in estimator obtained by replacing population quantities with their empirical counterparts. In the next proposition we establish strong consistency of the estimator $\hat{\theta}_{p,n}(X)$.

\begin{theorem}\label{prop-convergence}
	Assume $X \in \mathcal{X}$. Let $\widehat{\theta}_{p,n}(X)$ be the empirical estimator defined in \eqref{theta-emp}. For any $p \in D_X$, it holds that $$\widehat{\theta}_{p,n}(X) \rightarrow \theta_p(X) \quad \mbox{a.s.}$$
\end{theorem}

\begin{proof}
	

	Let us denote for notation simplicity $es_p := \mbox{ES}_p(X)$, $x_p := \mbox{VaR}_p(X)$, $\mu := \mathbb{E}[X]$ and by $\widehat{es}_{p,n}, \hat{x}_{p,n}, \hat{\mu}_n$, the respective empirical estimators (referring to an iid sample of size $n$). Working with the absolute difference between the empirical estimator and $\theta_p(X)$ we get
	\begin{eqnarray*}
		|\hat{\theta}_{p,n}(X) - \theta_p(X)| &=& (1-p) \left\lvert \frac{\widehat{es}_{p,n} - \hat{x}_{p,n}}{ \hat{x}_{p,n} - \hat{m}_n } - \frac{es_{p} - x_{p}}{ x_{p} - m } \right\rvert \\
		&=& C_{p,n} \, \left\lvert (\widehat{es}_{p,n} - \hat{x}_{p,n})(x_p - \mu) - (es_p - x_p) (\hat{x}_{p,n} - \hat{\mu}_n) \right\rvert \\
		& \leq & C_{p,n} \, \left( |\widehat{es}_{p,n} \,x_p - es_p \,\hat{x}_{p,n} | + |\hat{x}_{p,n} \, \mu - x_p \, \hat{\mu}_n | + |es_{p} \, \hat{\mu}_n - \widehat{es}_{p,n} \, \mu | \right)\\
		&\leq& C_{p,n} \left\{  | \widehat{es}_{p,n} - es_p| |x_p| + |\widehat{es}_{p,n} - es_p| |\mu| \right.
		+ |\hat{x}_{p,n} - x_p| | es_p | + |\hat{x}_{p,n} - x_p| | \mu | \\
		&\,\,\,& \,\,\,\,\,\,\,\,\,\,\, + \left. |\hat{\mu}_n - \mu| | es_p | + |\hat{\mu}_n - \mu| | x_p |	\right\} 	\\
		&=& C_{p,n} \left\{ |\widehat{es}_{p,n} - es_p| (|x_p| + |\mu|) + |\hat{x}_{p,n} - x_p| ( |es_p| + |\mu|) \right.\\
		&\,\,\,& \,\,\,\,\,\,\,\,\,\,\,  \left.+ |\hat{\mu}_n - \mu|( | es_p| + |x_p|) \right\}
	\end{eqnarray*}  
	where $C_{p,n} := | (1-p)[(x_p - \mu)(\hat{x}_{p,n} - \hat{\mu}_n)]^{-1} | >0$. From the last inequality it suffices to show that the empirical estimators $\widehat{es}_{p,n}, \hat{x}_{p,n},\hat{\mu}_n$ converge almost surely to their population counterparts. We recall that $X \in \mathcal{X} \subset L^1$. We show the result in three steps:\\
	
	\noindent \emph{Step 1 (Strong consistency for $\hat{\mu}_n$)} For the empirical mean estimator (sample mean) $\hat{\mu}_n = \bar{X}$ it is well known by the Strong Law of Large Numbers (SLLN) that converges to $\mu = \E[X]$ almost surely. 
	
	\noindent \emph{Step 2 (Strong consistency for $\hat{x}_{p,n}$)} By Glivenko–Cantelli, the empirical distribution function $\hat{F}_{X,n}$ converges uniformly almost surely to $F_X$. Since $F_X$ is continuous and strictly increasing at $x_p = F_X^{-1}(p)$, this implies strong consistency of the empirical quantile $\hat{x}_{p,n}$, i.e. $\hat{x}_{p,n} \to x_{p}$ almost surely for any $p \in (0,1)$ (see e.g. Theorem 2.3.1 in \cite{serfling2009approximation}).
	
	\noindent \emph{Step 3 (Strong consistency for $\widehat{es}_{p,n}$)} We recall that by SLLN it holds that $\frac{1}{n}\sum_{i=1}^n|X_i| \to \int_{\R} |x| dF_X(x)$ almost surely. Combined with the almost surely weak convergence of the empirical induced probability measure $\mathbb{P}_{X,n}$ to the induced probability measure $\mathbb{P}_{X}$ (by Glivenko-Cantelli) it implies that $W_1(\mathbb{P}_{X,n}, \mathbb{P}_X) \to 0$ almost surely (by Proposition 7.1.5 in \cite{ambrosio2005gradient} applied with $p=1$) where $W_1$ denotes the $1$-Wasserstein distance. Since the involved probability distributions are in the real line it holds that 
	$$ W_1(\mathbb{P}_{X,n}, \mathbb{P}_X) = \int_0^1 |\widehat{F}^{-1}_{X,n}(p) - F^{-1}_X(p) | dp = \| \widehat{F}^{-1}_{X,n} - F^{-1}_X\|_1 \to 0 \quad \mbox{a.s.} $$
	Then, for any fixed $p \in (0,1)$ we have
	\begin{eqnarray*}
		|es_p - \widehat{es}_{p,n}| &=& \bigg| \frac{1}{1-p} \int_p^1 \big( \widehat{F}^{-1}_{X,n}(u) - F^{-1}_X(u) \big) du \bigg| \leq \frac{1}{1-p} \big\| \widehat{F}^{-1}_{X,n} - F^{-1}_X\big\|_1 \to 0 \quad \mbox{a.s.} 
	\end{eqnarray*}
	Combining the consistency results from the three steps we get that the empirical estimator $\hat{\theta}_{p,n}(X)$ converges almost surely to $\theta_p(X)$ for any $p \in D_X \subset (0,1)$.
\end{proof}

\begin{remark}
	As an alternative empirical estimator, one may consider a kernel-type version incorporating the output of the estimator stated in \eqref{theta-emp}, to obtain a more smooth behaviour, especially in small samples. Such an estimator in the spirit of Nadaraya-Watson nonpara\-metric local regression schemes (please see \cite{wand1994kernel}) can be easily constru\-cted. Let us denote by $\hat{\theta}_{k} := \hat{\theta}_{p_k, n}(X)$ the estimations provided by the empirical estimator at certain level points $p_k \in (0,1)$ for $k=1,2,...,m$. Then a kernel-type estimator could be constructed as
	\begin{equation*}\label{theta-np}
		\widetilde{\theta}_{p,h}(X) = \sum_{k=1}^m w_{k,h}(p) \hat{\theta}_{k}, 
	\end{equation*}
	where the local weights $\{w_{k,h}(\cdot) \}_{k=1}^m$ are calculated by
	$$ w_{k,h}(p) = \frac{ K((p-p_k)/h) }{  \sum_{\ell = 1}^m K((p-p_{\ell})/h)  }, \,\,\,\, k=1,2,...,m $$
	and $K(\cdot)$ denotes the kernel function that is used and $h>0$ denotes the smoothing parameter. The use of the Gaussian kernel function is suggested here. The choice of $h$ could be performed by applying any standard empirical rule, e.g. Silverman's rule of thumb. 
\end{remark}
	
When comparing different risks with $\theta$-index, a partial order is naturally induced. According to the aforementioned property, the resulting order concentrates on the pure shape characteristics of the risks excluding location and scale effects. The relevant definition follows. 
	
\begin{definition}\label{theta-order}
	Given two loss variables $X,Y \in \mathcal{X}$, we say that $X$ is smaller than $Y$ in the $\theta$-order, denoted by $X \leq_{\theta} Y$, if $\theta_p(X) \leq \theta_p(Y)$ for all $p \in D_X \cap D_Y$.
\end{definition}

By Proposition \ref{prop-ls-inv}, one can verify that if $X \leq_{\theta} Y$ then it also holds that $\alpha \, X + \beta \leq_{\theta} Y$ for any $\alpha>0$ and $\beta \in \R$. Therefore, the resulting stochastic order does not take into account effects that do not change the underlying tail-shape of the loss distribution. In the following, we provide a characterization result under which the discussed order is recovered.

\begin{theorem}\label{theta-tail-order}
	Consider two non-negative random variables $X,Y \in \mathcal{X}$. We have that $X \leq_{\theta} Y$ if and only if the ratio
	\begin{equation}\label{es-ratio}  
		\frac{\mbox{ES}_p(Y) - \mathbb{E}[Y]}{ \mbox{ES}_p(X) - \mathbb{E}[X]} 
	\end{equation}
is increasing in $p \in D_X \cap D_Y$.
\end{theorem}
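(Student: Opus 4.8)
Write $g_X(p) := \mbox{ES}_p(X) - \mathbb{E}[X]$ and likewise $g_Y(p) := \mbox{ES}_p(Y) - \mathbb{E}[Y]$; the ratio in the statement is $g_Y(p)/g_X(p)$. The plan is to re-express the $\theta$-index entirely through $g_X$ and its derivative, so that the $\theta$-order reduces to a statement about the logarithmic derivative of $p \mapsto g_X(p)$, which is exactly monotonicity of $g_Y/g_X$.

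The central first step is the identity
\begin{equation*}
	g_X'(p) \;=\; \frac{\mbox{ES}_p(X) - \mbox{VaR}_p(X)}{1-p}, \qquad p \in (0,1),
\end{equation*}
equivalently $\mbox{ES}_p(X) - \mbox{VaR}_p(X) = (1-p)\,g_X'(p)$. I would obtain this by differentiating the TVaR representation $(1-p)\,\mbox{ES}_p(X) = \int_p^1 \mbox{VaR}_u(X)\,du$ with respect to $p$, which is legitimate under the standing absolute-continuity assumption (the same kind of differentiation is already used in the proof of Proposition~\ref{prop-3}). I would also record the positivity facts needed later: for $p \in D_X$ one has $g_X(p) = \mbox{ES}_p(X) - \mathbb{E}[X] \ge \mbox{VaR}_p(X) - \mathbb{E}[X] > 0$, while $g_X'(p) = (1-p)^{-2}\int_{\mbox{VaR}_p(X)}^{\infty}\overline{F}(x)\,dx > 0$ for every $p \in (0,1)$ (using the integral form of $\mbox{ES}_p - \mbox{VaR}_p$ already invoked in the proof of Theorem~\ref{theorem-1}).

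Next I would substitute into the formula $\theta_p(X) = (1-p)\,\frac{\mbox{ES}_p(X) - \mbox{VaR}_p(X)}{\mbox{VaR}_p(X) - \mathbb{E}[X]}$ from the proof of Theorem~\ref{theorem-1}(i), writing $\mbox{VaR}_p(X) - \mathbb{E}[X] = g_X(p) - (\mbox{ES}_p(X) - \mbox{VaR}_p(X)) = g_X(p) - (1-p)g_X'(p)$. This gives
\begin{equation*}
	\theta_p(X) \;=\; \frac{(1-p)^2\,g_X'(p)}{\,g_X(p) - (1-p)g_X'(p)\,},
	\quad\text{equivalently}\quad
	\frac{1-p}{\theta_p(X)} + 1 \;=\; \frac{g_X(p)}{(1-p)\,g_X'(p)},
\end{equation*}
and by the positivity noted above every quantity here is strictly positive for $p \in D_X$. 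Since $t \mapsto \frac{1-p}{t} + 1$ is strictly decreasing on $(0,\infty)$, for each $p \in D_X \cap D_Y$ we then get
\begin{align*}
	\theta_p(X) \le \theta_p(Y)
	&\iff \frac{g_X(p)}{(1-p)g_X'(p)} \ge \frac{g_Y(p)}{(1-p)g_Y'(p)} \\
	&\iff \frac{g_X'(p)}{g_X(p)} \le \frac{g_Y'(p)}{g_Y(p)}
	\iff \frac{d}{dp}\,\log\frac{g_Y(p)}{g_X(p)} \ge 0 .
\end{align*}
Finally, because $p \mapsto \mbox{VaR}_p(X)$ is non-decreasing, $D_X$ and $D_Y$ are intervals of the form $(\,\cdot\,,1)$, so $D_X \cap D_Y$ is an interval on which $g_Y/g_X$ is positive and differentiable; hence the last inequality holds for all $p \in D_X \cap D_Y$ if and only if $p \mapsto g_Y(p)/g_X(p)$ is non-decreasing there. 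In view of Definition~\ref{theta-order}, this is precisely the claimed equivalence.

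I expect the main obstacle to be the first step together with its regularity and positivity bookkeeping: the clean identity $g_X'(p) = (\mbox{ES}_p(X) - \mbox{VaR}_p(X))/(1-p)$ is what converts a statement about $\theta_p$ into one about a logarithmic derivative of $\mbox{ES}_p - \mathbb{E}[X]$, and one must also check that $g_X' > 0$ and that $D_X \cap D_Y$ is an interval, so that the pointwise sign condition on the derivative is genuinely equivalent to monotonicity of the ratio. The remaining manipulations are purely algebraic.
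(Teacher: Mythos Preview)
Your proposal is correct and follows essentially the same approach as the paper: both hinge on the identity $\frac{d}{dp}\,\mbox{ES}_p(X) = (\mbox{ES}_p(X) - \mbox{VaR}_p(X))/(1-p)$ and then reduce the monotonicity of the ratio to the inequality $\theta_p(X)\le\theta_p(Y)$ by elementary algebra. Your packaging via the logarithmic derivative of $g_Y/g_X$ and the explicit positivity and interval arguments for $D_X\cap D_Y$ is more careful than the paper's ``after some computations'', but the underlying argument is the same.
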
	
\begin{proof}
   Recall that $\frac{d}{dp}\left[\mbox{ES}_p(X)\right] = (\mbox{ES}_p(X) - \mbox{VaR}_p(X))/(1-p)$ and $\frac{d}{dp}\left[\mbox{ES}_p(Y)\right] = (\mbox{ES}_p(Y) - \mbox{VaR}_p(Y))/(1-p)$. Working with the required condition provided in equation \eqref{es-ratio} we have
   \begin{eqnarray*}
   	\frac{d}{dp}\bigg[ \frac{\mathrm{ES}_p(Y) - \E[Y] }{ \mathrm{ES}_p(X) - \E[X] } \bigg] &\geq& 0 \iff \\
   	( \mathrm{ES}_p(Y) - \mathrm{VaR}_p(Y) ) ( \mathrm{ES}_p(X) - \E[X] ) &\geq& ( \mathrm{ES}_p(X) - \mathrm{VaR}_p(X) ) ( \mathrm{ES}_p(Y) - \E[Y] ) \iff \\
   	\mathrm{ES}_p(Y) ( \mathrm{VaR}_p(X) - \E[X] ) + \mathrm{VaR}_p(Y) \E[X] & \geq & \mathrm{ES}_p(X) ( \mathrm{VaR}_p(Y) - \E[Y] ) + \mathrm{VaR}_p(X) \E[Y] \iff \\
   	( \mathrm{ES}_p(Y) - \mathrm{VaR}_p(Y) ) ( \mathrm{VaR}_p(X) - \E[X] ) &\geq& ( \mathrm{ES}_p(X) - \mathrm{VaR}_p(X) ) ( \mathrm{VaR}_p(Y) - \E[Y] ) \iff \\
   	\theta_p(Y) &\geq& \theta_p(X) 
   \end{eqnarray*}
\end{proof}

Next, we provide some characterizations of the underlying loss variable's tail behaviour based on the formula of $\theta_p(X)$. The mean excess function (or mean residual lifetime) is employed in this attempt, since it is of great interest in actuarial science and it uniquely characterizes the loss variable's distribution (see e.g. \cite{marshall2007life} and references therein). As a particular case of interest, we consider the family of loss variables for which the mean excess loss function can be represented in an affine (linear) form, i.e. 
\begin{equation}\label{mel}
	e_X(x) = \mathbb{E}[ X - x | X > x ] =  \alpha x + \beta, \,\,\,\, x \geq 0
\end{equation}	
for $\alpha > -1, \beta > 0$. In this case, it holds tha $\mathbb{E}[X] = e_X(0) = \beta$. For any risk $X \sim$ GP($\alpha,\beta$), according to the parameterization provided in \cite{nair2013quantile}, one can verify that equation \eqref{mel} is satisfied. Since for such risks it holds that
	$$\mbox{VaR}_p(X) = \frac{\beta}{\alpha} \left[ (1-p)^{-\frac{\alpha}{\alpha+1}}-1  \right], \,\,\,\,
	\mbox{ES}_p(X) = \frac{\beta}{\alpha} \left[(\alpha + 1)(1-p)^{-\frac{\alpha}{\alpha+1}}-1 \right]$$
we can show that the related $\theta_X(p)$ is provided by the equation
\begin{equation}\label{theta-GP}
	\theta_p(X) = (1-p) \, \frac{\alpha \, \frac{\beta}{\alpha} 
		\left[(1-p)^{-\frac{\alpha}{\alpha+1}}-1\right] +\beta}
	{\frac{\beta}{\alpha} \left[(1-p)^{-\frac{\alpha}{\alpha+1}}-1 \right]- \beta},
\end{equation}
while the condition $\mbox{VaR}_p(X) > \mathbb{E}[X]$ is satisfied (given the shape parameter $\alpha$) for all $p \in (0,1)$ satisfying $\frac{1}{\alpha}( (1-p)^{-\alpha/(\alpha+1)} - 1)>0$. In the following we provide our main characterization result for the tail behaviour of a loss variable.

\begin{proposition}
	Let $X \in \mathcal{X}$ with $\E[X]=\beta$ such that  $0 \leq \beta < \infty$. Then, for any $p \in D_X$ it holds that
	\begin{equation}\label{char}
		\theta_p(X) = (1-p) \frac{ \alpha \mbox{VaR}_p(X) + \beta }{ \mbox{VaR}_p(X) - \beta }
	\end{equation}
	for all $\alpha > -1$, if and only if $e_X(\mbox{VaR}_p(X)) = \alpha \mbox{VaR}_p(X) + \beta$.
\end{proposition}	

\begin{proof}
	Assume a loss variable $X$ with $0 \leq \mathbb{E}[X] = \beta < \infty$ and $\theta_p(X)$ as stated in \eqref{char}. Then, the corresponding mean excess function at point $\mbox{VaR}_p(X)$ is given by
	$$e_X(\mbox{VaR}_p(X)) = \alpha \mbox{VaR}_p(X) + \beta$$
	which results to the mean excess function $e_X(x) = \alpha x + \beta$. 
\end{proof}
	
\color{black}	

\begin{remark}
	The Generalized Pareto distribution is a quite flexible loss model that allows for the recovery of other distributions which makes the aforementioned result quite general. For instance, $GP(\alpha,\beta)$ contains the following three well-known loss models: 
	\begin{itemize}
		\item For $\alpha \to 0$ and $\beta = 1/\lambda>0$ the exponential distribution is retrieved with tail function
		$$\overline{F}(x) = \lambda \, e^{-\lambda x}, \;\; x \geq 0.$$ 
		
		\item For $\alpha = (a-1)^{-1}> 0$ and $\beta = \kappa \, (a-1)^{-1}>0$, where $a>1$ and $\kappa>0$,  
		we get the Pareto II (or Lomax) distribution with tail function
		$$ \overline{F}(x) = \left(\frac{\kappa}{\kappa+x}\right)^a, \;\; x \geq 0.$$
		
		\item  For $\alpha = -(c-1)^{-1} \in (-1,0)$ and $\beta = \omega \, (c+1)^{-1}>0$, where $c>0$ and $\omega>0$, we get the Rescaled Beta distribution with tail function
		\[
		\overline{F}(x) = \left(1-\frac{x}{\omega}\right)^c, \;\; 0 \le x \le \omega.
		\]  
		In the special case where $c=1$ the Rescaled Beta distribution yields the Uniform distribution on the interval $[0, \omega]$.
	\end{itemize}  
\end{remark}

\color{black}

\section{Marginal risk contributions with respect to VaR via $\theta$-index}\label{Sec-4} 

We discuss here the Euler's risk allocation principle introduced in \cite{tasche2007capital} within the context of the FES and the $\theta$-index. For an aggregate loss position constituted by the sum of $d$ different loss components (differrent sectors of an insurance company), i.e.
$$ X = X_1 + X_2 + \cdots + X_d,$$
where $X\in \mathcal{X}$ and also $X_j \in \mathcal{X}$ for any $j=1,2,...,d$. The Euler's allocation principle  allows for estimating the risk contribution for each component to the total risk position. For a certain risk mapping $\varrho(\cdot)$, the $j$-th individual risk contribution is calculated by
\begin{equation}\label{euler}
	\varrho(X_j|X) = \left. \frac{d}{dh} \varrho(X + h X_j) \right|_{h=0}. 
\end{equation}
For homogeneous risk measures the full allocation property is satisfied, i.e.  
\begin{equation}\label{fap}
	\varrho(X) = \sum_{j=1}^d \varrho(X|X_j).
\end{equation}	
Under the perspective of the standard risk measures VaR and ES, the individual risk contributions are given by the formulas
\begin{eqnarray}
	\mbox{VaR}_p(X_j | X) &=& \mathbb{E}[ X_j | X = \mbox{VaR}_p(X) ] \label{VaR-euler}\\
	\mbox{ES}_p(X_j | X) &=& \mathbb{E}[ X_j | X \geq \mbox{VaR}_p(X) ] \label{ES-euler}
\end{eqnarray}	
as proved in \cite{tasche2007capital}, while it is immediate that $\mathbb{E}[X_j | X ] = \mathbb{E}[X_j]$. We turn our attention in incorporating this risk allocation approach for assessing the contri\-bution of each loss component to the aggregate risk. First, we provide a general result concerning the individual risk contributions with respect to FES. 

\begin{proposition}\label{prop-7}
	Consider an aggregate loss position $X = \sum_{j=1}^d X_j$. Assume that $X \in \mathcal{X}$ and $X_j \in \mathcal{X}$ for all $j=1,2,...,d$. Then, for any $p \in (0,1)$ and $\theta>0$, the risk contribution of the $j$-th component with respect to FES is given by
	\begin{equation}\label{fes-alloc}
		\mbox{FES}_p(X_j | X; \theta) = \frac{1-p}{1-p+\theta} \mbox{ES}_p(X_j | X) + \frac{\theta}{1-p+\theta} \mathbb{E}[X_j]
	\end{equation}
	while the full allocation property is also satisfied, i.e. $\mbox{FES}_p(X;\theta) = \sum_{j=1}^d \mbox{FES}_p(X_j | X; \theta)$.
\end{proposition}

\begin{proof}
	The individual risk allocation of FES stated in \eqref{fes-alloc} is immediate by applying the definition of the Euler's allocation in \eqref{euler} to FES representation given in \eqref{mixture}, employing the result for Euler allocation with respect to ES stated in \eqref{ES-euler}, and the immediate result $\E[X_j | X] = \E[X_j]$. The full allocation property of the risk measure, i.e. $\sum_{j=1}^d \mbox{FES}_p(X_j|X;\theta) = \mbox{FES}_p(X;\theta)$ follows immediately by the full allocation properties of ES and $\mathbb{E}[X]$.
\end{proof}

Next, we consider $\theta$-index and the resulting mixture representation of VaR through FES with level-dependent mixing weights. For notation convenience, we define this mixture as the Probability Equal Level representation of VaR through FES, i.e.
\begin{equation}\label{PELVaR}
	\mbox{PELVaR}_p(X) := \mbox{FES}_p(X; \theta_p(X)) = \mbox{VaR}_p(X)
\end{equation}
for any $p \in D_X$, with the mixture determined the same way as in \eqref{mixture} but replacing the constant parameter $\theta$ with the level-dependent functional $\theta_p(X)$ which ensures the equivalence. This allows for a level-varying weighting scheme between ES and $\E[X]$ that replicates VaR, allowing for potential insights in the inter-dependence of the involved risk measures. Next, we provide in closed form, the marginal risk contributions for $\theta$-index and VaR (through its equivalence relation with FES, i.e. PELVaR) according to the Euler's risk allocation principle. 

\begin{proposition}\label{prop-8}
	Consider an aggregate loss position $X = \sum_{j=1}^d X_j$. Assume that $X \in \mathcal{X}$ and $X_j \in \mathcal{X}$ for all $j=1,2,...,d$. Then, for any $p \in D_X$\footnote{Note that throughout this section $D_X := \bigcap_{i=1}^d D_{X_i} \subset (0,1)$ when $X$ corresponds to a sum of random variables.} the following hold:
	\begin{itemize}
		
		\item[(i)] The risk contribution of the $j$-th component to the aggregate loss position with respect to the $\theta$-index according to the Euler's allocation principle is given by
		\begin{equation}\label{theta-risk-alloc}
			\theta_p(X_j | X) = 
			\theta_p(X) \left[  \frac{\mbox{ES}_p(X_j | X) -  \mbox{VaR}_p(X_j | X)}{ \mbox{ES}_p(X) - \mbox{VaR}_p(X) } 
			- \frac{ \mbox{VaR}_p( X_j | X) - \mathbb{E}[X_j] }{ \mbox{VaR}_p(X) - \mathbb{E}[X] }    \right]	
		\end{equation}
		for $j=1,2,...,d$. 
		
		\item[(ii)] The contribution of the $j$-th component to the aggregate loss position with respect to the PELVaR according to the Euler's allocation principle, coincides with the individual risk contribution with respect to VaR and is expressed through the representation
		\begin{eqnarray}\label{fes-pel-alloc}
			\mbox{VaR}_p(X_j|X) &=& \mbox{PELVaR}_p(X_j | X) 
			= \frac{1-p}{1-p+\theta_p(X)} \mbox{ES}_p(X_j|X) +  \frac{\theta_p(X)}{1-p+\theta_p(X)} \mathbb{E}[X_j] \nnb \\
			&\,\,\,& - \frac{\theta_p(X_j | X)}{1-p+\theta_p(X)} ( \mbox{VaR}_p(X) - \E[X] )			
		\end{eqnarray}
		for $j=1,2,...,d$.
	\end{itemize}
\end{proposition}

\begin{proof}
	(i) It suffices to use the Euler's risk allocation definition stated in \eqref{euler} for $\theta_p(X)$. Then, we have
	\begin{eqnarray*}
		\theta_p(X_j | X) &=& \left. \frac{d}{dh}\theta_p(X + h X_j) \right\vert_{h=0} 
		= (1-p) \frac{\mbox{ES}_p(X_j | X) - \mbox{VaR}_p(X_j | X)}{ \mbox{VaR}_p(X) - \mathbb{E}[X] }  \\
		&\,\,& - (1-p) \frac{\mbox{ES}_p(X) - \mbox{VaR}_p(X) }{ (\mbox{VaR}_p(X) - \mathbb{E}[X])^2 } ( \mbox{VaR}_p(X_j | X) - \mathbb{E}[X_j])\\
		&=& \theta_p(X) \left[  \frac{\mbox{ES}_p(X_j | X) -  \mbox{VaR}_p(X_j | X)}{ \mbox{ES}_p(X) - \mbox{VaR}_p(X) } 
		- \frac{ \mbox{VaR}_p( X_j | X) - \mathbb{E}[X_j] }{ \mbox{VaR}_p(X) - \mathbb{E}[X] }     \right].
	\end{eqnarray*}
	
	\noindent (ii) Applying the definition of the Euler's risk allocation principle we get
	\begin{eqnarray*}
		\mbox{PELVaR}_p(X_j | X) 
		&=&  \left. \frac{d}{dh} \mbox{PELVaR}_p(X + h X_j) \right|_{h=0} \\
		&=&  \left. \frac{d}{dh} \mbox{FES}_p(X + h X_j; \theta_p(X + h X_j)) \right|_{h=0} \\
		&=&  \left. \frac{d}{dh} \left[ \frac{(1-p)\mbox{ES}_p(X + h X_j) + \theta_p(X + h X_j) \mathbb{E}[X + h X_j]}{1-p+\theta_p(X + h X_j)} \right] \right|_{h=0} \\
		&=& \frac{1-p}{1-p+\theta_p(X)} \mbox{ES}_p(X_j | X) 
		+ \frac{\theta_p(X)}{1-p+\theta_p(X)} \mathbb{E}[X_j] \\
		&\,\,\,& - \frac{\theta_p(X_j | X)}{1-p+\theta_p(X)} \left( \frac{1-p}{1-p+\theta_p(X)}\mbox{ES}_p(X) \right. \\
		&\,\,\,& \left. + \frac{\theta_p(X)}{1-p+\theta_p(X)}\mathbb{E}[X] - \mathbb{E}[X]  \right) \\
		&=& \frac{ (1-p) \mbox{ES}_p(X_j|X) + \theta_p(X) \mathbb{E}[X_j]}{1-p+\theta_p(X)}  \\
		&\,\,\,& - \frac{ \theta_p(X_j|X) (\mbox{VaR}_p(X) - \mathbb{E}[X]) }{1-p+\theta_p(X)}\\
		&=& \frac{ (1-p)\mbox{ES}_p(X_j|X) + \theta_p(X) \E[X_j] - \theta_p(X_j|X) ( \mbox{VaR}_p(X) - \E[X] )  }{ 1-p+\theta_p(X) }.
	\end{eqnarray*}
	Since for each $p \in D_X$ we have $\mbox{PELVaR}_p(X) = \mbox{VaR}_p(X)$ as functionals on $L^1$, it follows that for any direction $Y \in L^1$ the maps $h \mapsto \mbox{PELVaR}_p(X + hY)$ and $h \mapsto \mbox{VaR}_p(X + hY)$ coincide for all $h$ near $0$. Hence, whenever the directional derivative at $h=0$ exists, it is the same for both functionals. In particular, for $Y = X_j$ we obtain
	$$ \mbox{PELVaR}_p(X_j | X) = \left. \frac{d}{dh}\mbox{PELVaR}_p(X + h X_j) \right|_{h=0} =  \left. \frac{d}{dh}\mbox{VaR}_p(X + h X_j) \right|_{h=0} = \mbox{VaR}_p(X_j | X)$$
	so the individual risk contributions coincide. Equation \eqref{fes-pel-alloc} therefore provides an explicit representation of the common Euler contributions 
	$\mbox{PELVaR}_p(X_j|X) = \mbox{VaR}_p(X_j|X)$ in terms of ES and the $\theta$-index.
\end{proof}

\begin{remark}
	Equation \eqref{theta-risk-alloc} indicates that the marginal $\theta$-contribution of component $X_j$ depends mainly on two terms: (a) the ratio $( \mbox{ES}_p(X_j|X) - \mbox{VaR}_p(X_j|X) )/(\mbox{ES}_p(X) - \mbox{VaR}_p(X))$ which measures how much loss component $X_j$ relatively contributes to the tail-accumulation gap $\mbox{ES}_p(X) - \mbox{VaR}_p(X)$, and (b) the ratio $( \mbox{VaR}_p(X_j|X) - \E[X_j] )/(\mbox{VaR}_p(X) - \E[X])$ which measure the $j$-th component relative contribution to the mean-quantile gap $\mbox{VaR}_p(X) - \E[X]$. 
	Hence, it is clear that $\theta_p(X_j|X)$ is positive when $X_j$ contributes proportionally more to tail accumulation beyond VaR than to the shift from the mean, i.e. when
	$$ \frac{\mbox{ES}_p(X_j | X) -  \mbox{VaR}_p(X_j | X)}{ \mbox{ES}_p(X) - \mbox{VaR}_p(X) } 
	> \frac{ \mbox{VaR}_p( X_j | X) - \mathbb{E}[X_j] }{ \mbox{VaR}_p(X) - \mathbb{E}[X] }$$   
	or equivalently, when
	\begin{equation*}\label{alloc-ineq}
		\theta_{p,j}(X) :=  (1-p)\frac{\mbox{ES}_p(X_j | X) -  \mbox{VaR}_p(X_j | X)}{ \mbox{VaR}_p( X_j | X) - \mathbb{E}[X_j] } > \theta_p(X)
	\end{equation*}
	for any $p \in D_X$. Otherwise, $\theta_p(X_j|X)$ is negative. In this way, the $\theta$-index allocation identifies which components locally steepen or flatten the tail of $X$ at level $p$, relative to their contribution to the global gaps of the aggregate loss.
\end{remark}

\begin{remark}
	Unlike $\mbox{ES}_p(X_j | X)$, the numerical evaluation of the risk contribution $\mbox{VaR}_p(X_j | X)$ is well known to be challenging, as it requires approximating the conditional expectation $\E[X_j | X = \mbox{VaR}_p(X)]$. Regression-based and kernel-smoothing techniques have been proposed \citep{tasche2007capital, gribkova2023estimating}. These approaches typically involve choices of bandwidths or smoothig parameters and are affected by data sparsity in the tail. Proposition \ref{prop-8} provides an alternative route. Since $\mbox{PELVaR}_p(X) = \mbox{VaR}_p(X)$ for all $p\in D_X$, the marginal VaR contributions coincide with the corresponding PELVaR contributions. The latter admit the explicit representation \eqref{fes-pel-alloc}, which depends only on ES-based quantities and on the $\theta$-index allocation $\theta_p(X_j|X)$. These inputs are substantially easier to estimate, as ES allocations rely on the tail event $\{X \geq \mbox{VaR}_p(X)\}$, which has positive probability, and $\theta_p(X_j|X)$ can be obtained by simple differentiation of the $\theta$-index or via Monte Carlo approximations that are considerably more stable. Consequently, PELVaR offers an operationally attractive method for estimating VaR contributions since, instead of approximating directly the highly unstable quantity $\E[X_j | X = \mbox{VaR}_p(X)]$, one estimated the smoother objects $\mbox{ES}_p(X_j|X)$ and $\theta_p(X_j|X)$, and combines then in the aforementioned close-form identity. This suggest improved stability in practical applications.
\end{remark}

A natural question arising from the $\theta$-index allocation is whether its marginal contributions satisfy a global consistency property analogous to the full allocation identity of homogeneous risk measures. Although the $\theta$-index itself is not homogeneous, the structure revealed in \eqref{theta-risk-alloc} suggests that the contributions may exhibit a cancellation property when aggregated over all components. The next proposition shows that this is indeed the case: the marginal $\theta$-contributions always sum to zero, independently of the dependence structure or tail level $p$.

\begin{proposition}\label{prop-9}
	Consider an aggregate loss position $X = \sum_{j=1}^d X_j$. Assume that $X \in \mathcal{X}$ and $X_j \in \mathcal{X}$ for all $j=1,2,...,d$. Then, for any $p \in D_X$ it holds that
	\begin{equation}\label{theta-zero-sum}
		\sum_{j=1}^d \theta_p(X_j|X) = 0 
	\end{equation}
	inducing the full allocation property for PELVaR, i.e. $\sum_{j=1}^d \mbox{PELVaR}_p(X_j | X) 
	= \mbox{PELVaR}_p(X)$.
\end{proposition}

\begin{proof}
	The zero-sum risk allocation property for $\theta$-index stated in \eqref{theta-zero-sum} is immediate by combining the marginal $\theta$-index risk allocations stated in \eqref{theta-risk-alloc} with the full risk allocation properties of ES, VaR and $\mathbb{E}[X]$. Moreover, the full risk allocation property of PELVaR, is verified by summing the risk contributions with respect to PELVaR as determined in \eqref{fes-pel-alloc} and combining the full risk allocation property of FES for general $\theta$ and the zero-sum risk allocation property of $\theta$-index.
\end{proof}

This identity shows that the $\theta$-index measures relative effects across components, since their contributions must sum to zero. Together with representation \eqref{fes-pel-alloc}, it yields a coherent decomposition of VaR allocations through the $\theta$-index, thereby providing a clear and consistent way to express VaR contributions through the $\theta$-index.

We conclude this section by providing a coherent upper bound for aggregate loss positions relying on FES and the $\theta$-index.

\begin{proposition}\label{prop-risk-bound}
	Consider an aggregate loss position $X = \sum_{j=1}^d X_j$. Assume that $X \in \mathcal{X}$ and $X_j \in \mathcal{X}$ for all $j=1,2,...,d$. Then, for any $p \in D_X$ it holds that
	\begin{equation}\label{risk-sum}
		\mathrm{VaR}_p(X) \, \leq \, \sum_{i=1}^d \mathrm{FES}_p\big( X_i \, ; \, \theta_p(X) \big)\,  \leq \, \sum_{i=1}^d \mathrm{ES}_p(X_i)
	\end{equation}
\end{proposition}

\begin{proof}
	Fix $p \in D_X$ and set $\theta := \theta_p(X) > 0$. By the definition of PELVaR, we have
	$$ \mathrm{VaR}_p(X) = \mathrm{FES}_p(X;\theta). $$
	For this fixed choice of $\theta$, the mapping $Y \mapsto \mathrm{FES}_p(Y; \theta)$ is a coherent risk measure (see Prop. 3.1 in \cite{psarrakos2024flexibility}) and hence subadditive. Therefore, 
	$$  \mathrm{VaR}_p(X) = \mathrm{FES}_p(X; \theta) \leq 
			\sum_{i=1}^d \mathrm{FES}_p(X_i; \theta) = 
			\sum_{i=1}^d \mathrm{FES}_p(X_i; \theta_p(X) ).$$
	Finally, since $\mathrm{FES}_p(Y;\theta) \leq \mathrm{ES}_p(Y)$ for any $\theta>0$ and any loss position $Y$, it follows that 
	$$ \sum_{i=1}^d \mathrm{FES}_p(X_i; \theta_p(X)) \leq \sum_{i=1}^d \mathrm{ES}_p(X_i),$$
	which completes the proof.
\end{proof}

\begin{remark}
	The upper bound in Proposition \ref{prop-risk-bound} relies on the common parameter $\theta_p(X)$ associated with the aggregate loss position. As a result, the terms $\mathrm{FES}_p(X_i; \theta_p(X))$ do not coincide with the individual $\mathrm{VaR}_P(X_i)$, in contrast to the exact representation on the left-hand side. Nevertheless, this construction yields a coherent and typically less conservative upper bound than the sum of individual $\mathrm{ES}_p(X_i)$, thereby providing a meaningful compromise between the non-subadditive VaR and the overly conservative ES.
\end{remark}


\section{Tail behaviour analysis via $\theta$-index}\label{Sec-5}

In this section, we treat the $\theta$-index as a diagnostic tool for tail behaviour for the underlying loss distribution. In this perspective, it is more natural and convenient to work on the threshold-dependent version of the $\theta$-index. In this view, let us set for any $p \in D_X$ a threshold variable $u := \mbox{VaR}_p(X)$ providing a correspondence to both representations. Then, for any threshold $u > \mu = \E[X]$, the threshold-dependent version of the $\theta$-index of $X$ determined by 
\begin{equation}\label{theta-2}
	\theta(u) := \bar{F}_X(u) \frac{ e_X(u) }{ u - \mu }
\end{equation}
where $\bar{F}_X(u) = 1 - F_X(u)$. For what follows, we consider that $X \in \mathcal{X}$ and that the functions $\bar{F}_X(u)$ and $e_X(u)$ are strictly positive and twice differentiable at every point of the interval $(\mu, \infty)$. We turn our attention the information provided by the local derivatives of $\theta$ and $\widetilde{\theta} = \log \theta$ concerning the behaviour of the distribution of $X$, on a specified interval $[u_1, u_2] \subset (\mu, \infty)$. The logarithmic transform of $\theta$, i.e. the function $\widetilde{\theta}(u) := \log \theta(u)$ for any $u > \mu$, can be decomposed by 
$$ \widetilde{\theta}(u) = \log \theta(u) = \log \bar{F}_X(u) + \log e_X(u) - \log (u-\mu),$$
allowing for a clear distinction of the contributed term to: (a) the tail probability effect, (b) the conditional severity effect (throught the log mean excess function term), and (c) the local positional effect of the point (attachment) $u$ with respect to the main body of the distribution.  

We first study the decay rate of $\theta$ through the first derivative of $\widetilde{\theta}$. Since the derivative of $\theta$ is always negative in $(\mu, \infty)$ (by Proposition \ref{theta-decreasing}), we further examine which distributional aspects contribute to the decrease and also how rapid the decrease is. For any $u > \mu$, the log-rate of $\theta$ is given by the equation
\begin{equation}\label{theta-derivative}
	\widetilde{\theta}'(u) = \frac{d}{du}\left[ \log \theta(u) \right] = \frac{ \theta'(u) }{ \theta(u) } = -\left( \frac{1}{e_X(u)} + \frac{1}{u-\mu} \right) < 0.
\end{equation}
The expression \eqref{theta-derivative} reveals that the percentage rate at which tail heaviness diminishes when the attachment increases from $u$ to $u + du$ for infinitesimal $du>0$, is determined by the inverse mean excess function and the inverse distance from the mean $\mu$ of the distribution. From the actuarial point of view, a large mean excess (i.e. deep heavy tail) implies a small $1/e_X(u)$ as $u\to \infty$ and therefore a slower decay rate of $\theta$, indicating that tail heaviness persists. On the other hand, a small or flattening mean excess implies a large $1/e_X(u)$ as $u$ increases, leading to a rapid decay rate of $\theta$, indicating that the tail becomes attritional beyond the threshold $u$.

It is more informative to understand how the $\theta$-index behaves across an attachment interval. This is especially relevant in an actuarial context, where insurance and reinsurance layers correspond to finite intervals $[u_1,u_2]$ rather than isolated points and is important for the optimal design of the layers of the contracts. In particular, one would like to quantify how much of the tail heaviness observed at a lower attachment $u_1$ persists when the attachment is increased to $u_2$, and how this persistence depends on the characteristics of the mean excess function within the layer. The following result provides such an interval-level comparison by integrating the log-rate of $\theta$ across $[u_1, u_2]$, yielding a clean expression and practical bounds for the ratio $\theta(u_1)/\theta(u_2)$.

\begin{proposition}\label{prop-10}
	Let $\mu < u_1 < u_2 < \infty$. Then, we have that 
	\begin{equation}\label{dec-1}
		\frac{ \theta(u_1) }{ \theta(u_2) } =  \exp\bigg\{ \int_{u_1}^{u_2} \left[ \frac{1}{e_X(u)} + \frac{1}{u-\mu}\right] du \bigg\}.
	\end{equation}
	If in addition, $0 < m \leq e_X(u) \leq M < \infty$ for any $u \in [u_1, u_2]$ with 
	$ m:= \inf_{u \in [u_1,u_2]} e_X(u)$ and $M := \sup_{u \in [u_1,u_2]} e_X(u)$, we get the bounds
	\begin{equation}\label{dec-2}
		\frac{u_2 - \mu}{ u_1 - \mu } \, e^{ (u_2-u_1)/M } \leq \frac{ \theta(u_1) }{ \theta(u_2) } \leq 	\frac{u_2 - \mu}{ u_1 - \mu } \, e^{ (u_2-u_1)/m } 
	\end{equation}
\end{proposition}

\color{black}

\begin{proof}	
	The integral representation stated in \eqref{dec-1} is immediate by integrating \eqref{theta-derivative} on the interval $[u_1, u_2]$. For the inequality in \eqref{dec-2}, working on the interval $[u_1, u_2]$, and using that $m \leq e_X(u) \leq M$ we get
	\begin{eqnarray*}
		m \leq e_X(u) \leq M \Rightarrow \frac{1}{M} + \frac{1}{u-\mu} \leq \frac{1}{e_X(u)} + \frac{1}{u-\mu} \leq \frac{1}{m} + \frac{1}{u-\mu}.
	\end{eqnarray*} 
	Integrating on $[u_1, u_2]$ and multiplying we get
	\begin{eqnarray*}
		\int_{u_1}^{u_2} \left( \frac{1}{M} + \frac{1}{u-\mu} \right) du \leq  \int_{u_1}^{u_2} \left( \frac{1}{e_X(u)} + \frac{1}{u-\mu} \right) du \leq  \int_{u_1}^{u_2} \left( \frac{1}{m} + \frac{1}{u-\mu} \right) du \Rightarrow \\
			\int_{u_1}^{u_2} \left( \frac{1}{M} + \frac{1}{u-\mu} \right) du \leq \log\frac{ \theta(u_1)}{ \theta(u_2) } \leq  \int_{u_1}^{u_2} \left( \frac{1}{m} + \frac{1}{u-\mu} \right) du.
	\end{eqnarray*}
	The outer integrals calculation gives
	$$ \int_{u_1}^{u_2} \frac{1}{u-\mu}du = \log(u_2 - \mu) - \log(u_1 - \mu) = \log \frac{u_2 - \mu}{ u_1 - \mu},$$
	and replacing above we get
	\begin{eqnarray*}
		&& \frac{u_2-u_1}{M} +  \log \frac{u_2 - \mu}{ u_1 - \mu} \leq \log \frac{\theta(u_1)}{ \theta(u_2) } \leq \frac{u_2-u_1}{m} +  \log \frac{u_2 - \mu}{ u_1 - \mu}  \Rightarrow \\
		&&\frac{u_2 - \mu}{u_1 - \mu} e^{(u_2-u_1)/M} \leq   \frac{\theta(u_1)}{ \theta(u_2) } \leq \frac{u_2 - \mu}{u_1 - \mu} e^{(u_2-u_1)/m}
	\end{eqnarray*}
	which concludes the proof.
\end{proof}

The representation \eqref{dec-1} expresses the relative decay of the $\theta$-index across a layer $[u_1,u_2]$. From an actuarial viewpoint, this shows that the persistence of tail heaviness between two attachments is governed jointly by the shape of the mean excess function and the relative position of the layer with respect to the body of the distribution. The bounds in \eqref{dec-2} further quantify this behaviour in a model-robust way: if the mean excess is uniformly large on $[u_1,u_2]$,  then the exponential term is close to one, indicating that tail heaviness decays only slowly across the interval, a feature that characterizes very heavy-tailed regimes. Conversely, if the mean excess is uniformly small or nearly flat, then the exponential terms collapse rapidly, implying that tail heaviness diminishes quickly as we move up the layer. These bounds therefore offer a structural diagnostic for the behaviour of insurance layers, providing insights into how sensitive tail severity is to changes in attachment level and offering a robust criterion to distinguish persistent, homogeneous, and tapering regions/regimes of the tail.

While Proposition \ref{prop-10} quantifies how much tail heaviness survives across a layer, the next step is to determine how this characteristic changes locally, that is whether tail heaviness decays at an accelerating or decelerating rate. This offers a natural way to identify tail regimes across operational layers. It is easy to verify that the curvature of $\widetilde{\theta}$ yields for any $u \in I$ the decomposition 
\begin{equation}\label{curvature}
	\widetilde{\theta}''(u) =  \frac{e'_X(u)}{e_X^2(u)} + \frac{1}{(u - \mu)^2} = \frac{ h(u) e_X(u) - 1}{e_X^2(u)} + \frac{1}{(u - \mu)^2}
\end{equation}
where $h(\cdot)$ denotes the hazard rate function of $X$. 


Our next result performs a classification of the tail regimes by studying the curvature of $\widetilde{\theta}$.

\begin{proposition}\label{prop-11}
	Assume an interval $[u_1, u_2] \subset (\mu, \infty)$ and $\theta \in C^2([u_1, u_2])$. The tail regimes can be distinguished in the following cases:
	\begin{itemize}
		\item[(a)] The function $e_X(u)$ is strictly increasing in $u \in [u_1, u_2]$
		if and only if $\widetilde{\theta}''(u) > (u-\mu)^{-2}$ for any $u \in [u_1, u_2]$.
		
		\item[(b)] The function $e_X(u)$ is strictly decreasing in $u \in [u_1, u_2]$
		if and only if $\widetilde{\theta}''(u) < (u-\mu)^{-2}$ for any $u \in [u_1, u_2]$.
		
		\item[(c)] The function $e_X(u)$ is constant in $u \in [u_1, u_2]$
		if and only if $\widetilde{\theta}''(u) = (u-\mu)^{-2}$ for any $u \in [u_1, u_2]$.
	\end{itemize}
\end{proposition}

\begin{proof} From the calculation of the curvature of $\widetilde{\theta}$ we have that:
\begin{eqnarray*}
	\widetilde{\theta}''(u) = \frac{ e'_X(u) }{ [e_X(u)]^2 }  + \frac{1}{(u-\mu)^2}
\end{eqnarray*}
or equivalently
\begin{equation}\label{ref-rel}	
	e'_X(u) = [e_X(u)]^2 \, (\widetilde{\theta}''(u) - (u-\mu)^{-2}).
\end{equation}
The results stated in (a),(b),(c) follow directly by \eqref{ref-rel}.
\end{proof}

\begin{remark}\label{rmk-5}
	The conditions of Proposition \ref{prop-11} are formulated in terms of strict inequalities. In empirical settings, where no parametric distributional form is assumed, it is common to observe regions of the tail in which the empirical mean excess behaviour fluctuates around the boundary between the cases identified in the proposition. This corresponds to situations in which
	$$ |\widetilde{\theta}''(u) - (u-\mu)^{-2}| \leq \epsilon $$
	for some $\varepsilon >0$. If this condition holds for any $u$ in an interval $[u_1,u_2]$, we refer to this layer as a \emph{transition regime}, indicating a range of loss levels over which tail behaviour is not yet clearly aligned with a single asymptotic case of Proposition \ref{prop-11}, but instead lies close to the corresponding boundary. In practice, such transition regions often correspond to ranges where the empirical tail behaviour appears approximately stable, in the sense that it remains close to the constant mean excess boundary over a non-negligible interval.
\end{remark}

The derivative and curvature of $\log \theta$ provide a practical framework for examining how tail behaviour varies across different loss levels. In insurance and reinsurance applications, attachment levels typically span an interval rather than a single point, and the local geometry of $\theta$ reflects the behaviour of losses within this range. Changes in the sign of $(\log \theta)''(u) - (u-\mu)^{-2}$, indicate transitions between distinct tail behaviours. At lower levels, tail variation is mainly driven by the magnitude of losses, at higher levels the tail exhibits faster decay while between these extremes, both effects are present. These transitions are relevant for understanding the evolution of risk across layers and the sensitivity of pricing to changes in attachment. The results of this section therefore provide a natural basis for subsequent empirical analysis, in which the behaviour of $\theta$ and its local derivatives (decay rate, curvature) can be used to detect and interpret such changes in data.


\section{Illustration of $\theta$-index for standard loss models}\label{Sec-6}

In this section, we study some examples of standard loss models in the actuarial practice, providing $\theta$-index in closed form (wherever is possible) and determining the set $D_X$. 
First we examine some standard cases of shape invariant loss models, i.e. distributions which parameterization does not affect the underlying shape of the distribution but only location and scale features. Standard distributions that display this property are the Uniform, Normal and Exponential. These three cases provide some interesting benchmarks in the perspective of $\theta$-index, and could possibly be employed for a rough distinction with respect to the tail behaviour.

\begin{example}[Uniform]
	The simpliest case of a loss variable is when the Uniform distribution is considered (flat risk), i.e. $X \sim \mathcal{U}([\alpha, \beta])$ with $\alpha < \beta$ and $\alpha+\beta \geq 0$. Following the scale invariance property of $\theta$-index, we obtain the calculation
	\begin{equation*}\label{theta-unif}
		\theta_{\scriptsize \mbox{Unif}}(p) = \frac{ (1-p)^2 }{ 2p-1 }
	\end{equation*}
	which is independent of the location characteristics of the distribution. Taking into account the symmetry of this distribution, we obtain $D_X = (0.5, 1)$.
\end{example}

\begin{example}[Normal]
	For a Normal distributed loss variable $X \sim N(\mu, \sigma^2)$ with $\mu>0$, we have that 
	\begin{equation*}\label{theta-normal}
		\theta_{\scriptsize \mbox{Normal}}(p) = \frac{\varphi(\Phi^{-1}(p))}{ \Phi^{-1}(p) } - (1-p)
	\end{equation*}
	where $\varphi(\cdot)$ and $\Phi^{-1}(\cdot)$ denote the probability density function and the quantile function of the standard Normal distribution while it is easy to verify that $D_X = (0.5, 1)$.
\end{example}

\begin{example}[Exponential]
	For a loss random variable $X \sim$ Exp($\lambda$) with scale parameter $\lambda>0$ and  distribution $ F(x) =1- e^{-\lambda x}$ for $x \geq  0$, we can easily verify that
	\begin{equation*}\label{theta-exp}
		\theta_{\scriptsize \mbox{Exp}}(p) = -\frac{1-p}{\log(1-p)+1}.
	\end{equation*}
	and $D_X = (1-1/e, 1)$.
\end{example}

\begin{figure}[ht!]
	\centering
	\includegraphics[width=2.8in]{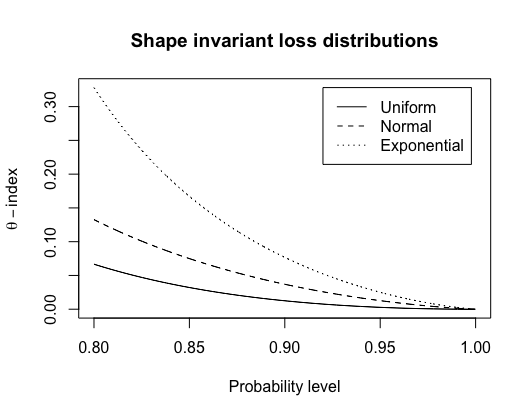}
	\caption{Illustration of $\theta_p(X)$ for loss distributions that maintain their shape pattern.}\label{fig-1}
\end{figure}

In Figure \ref{fig-1} are illustrated together the $\theta$-curves for the aforementioned basic parametric models. Among these three models, the Uniform distribution displays the less dangerous tail behaviour (fastest tail decay), while the exponential distribution displays the slowest tail decay. 

Next, we examine some standard loss distributions with varying shape features which are often employed in the actuarial and reliability theory and practice. From this family we examine  Student-t, LogNormal, Gamma, Weibull, Pareto and the more flexible Generalized Extreme Value (GEV) model.

\begin{example}[Student-t]
	A $t$-distributed loss random variable displays quite similar behaviour with the Normal distribution, but allows for heavier tail controlled by the degrees of freedom (shape) parameter $\nu \geq 1$. In this case, the $\theta$-index is given by 
	\begin{equation*}\label{theta-student}
		\theta_t(p \, ;\nu) = \frac{ g_{\nu}(t_{\nu}^{-1}(p)) (\nu + (t_{\nu}^{-1}(p))^2)  }{ (\nu-1) t_{\nu}^{-1}(p) } - (1-p)
	\end{equation*}
	where $g_{\nu}(\cdot)$ and $t^{-1}_{\nu}(\cdot)$ denote the probability density function and the quantile function of the standard Student-t distribution with $\nu$ degrees of freedom. Because of the symmetry of the distribution we obtain $D_X = (0.5,1)$ which is independent of the shape parameter $\nu$.
\end{example}

\begin{example}[LogNormal]
	The LogNormal distribution is a standard model in risk theory and reliability analysis. For $X \sim LN(\mu, \sigma)$ the $\theta$-index depends only on the shape parameter $\sigma>0$, and after some algebra one can derive the formula
	\begin{equation*}\label{theta-LN}
		\theta_{\mbox{\scriptsize LN}}(p \, ;\sigma) = \frac{\left( e^{\sigma^2/2} \bar{\Phi}(\Phi^{-1}(p) - \sigma) - (1-p) e^{\sigma \Phi^{-1}(p)}  \right)}{ e^{\sigma \Phi^{-1}(p)} - e^{\sigma^2/2} }  
	\end{equation*}
	where $\bar{\Phi}(\cdot) = 1 - \Phi(\cdot)$. Moreover, it is easy to verify that $D_X = (\Phi( \frac{\sigma}{2}),1)$.
\end{example}

\begin{example}[Gamma]
	Consider a loss random variable $X \sim$ Gamma$(\alpha, \lambda)$ with distribution function given by
	$$ F(x) =  \frac{ \gamma(\alpha, \lambda x) }{ \Gamma(\alpha) }, \,\,\, x \geq 0, $$
	where $\gamma(s,y) = \int_0^y t^{s-1} e^{-t} dt$ denotes the lower incomplete gamma function, $\Gamma(s) = \int_0^{\infty} t^{s-1} e^{-t} dt$ denotes the gamma function and $\alpha>0, \lambda>0$ represent the shape and scale parameters, respectively. Then, it is obtained the semi-explicit expression for $\theta$-index
	\begin{equation*}\label{theta-gamma}
		\theta_{\mbox{\scriptsize Gamma}}(p \, ; \alpha) = \frac{ \int_p^1 \gamma^{-1}(\alpha, \Gamma(\alpha) s ) ds - (1-p) \gamma^{-1}(\alpha, \Gamma(\alpha) p)  }{\gamma^{-1}(\alpha, \Gamma(\alpha) p) - \alpha}
	\end{equation*}
	where $D_X =  \left(\frac{\gamma(\alpha, \alpha)}{ \Gamma(\alpha) }, 1\right)$ depends on the shape parameter $\alpha>0$.
\end{example}

\begin{example}[Weibull]
	Consider a loss random variable $X \sim$ Weibull$(\alpha, \lambda)$ with shape parameter $\alpha>0$, scale parameter $\lambda>0$ and distribution function given by $ F(x) = 1-\alpha \lambda (\lambda x)^{\alpha - 1} e^{-(\lambda x)^{\alpha}}$ for $x\geq 0$. Then, the $\theta$-index is obtained in semi-closed form as
	\begin{equation*}\label{theta-wb}
		\theta_{\mbox{\scriptsize Weibull}}(p \, ; \alpha) = \frac{ \int_p^1 (-\log(1-s))^{1/\alpha} ds - (-\log(1-p))^{1/\alpha} }{ (-\log(1-p))^{1/\alpha} - \Gamma(1 + 1/\alpha) }
	\end{equation*}
	and $D_X = \left( 1 - e^{ - \left(\Gamma\left(1 + \frac{1}{\alpha}\right) \right)^{\alpha}}, 1\right)$.
\end{example}

\begin{example}[Pareto II or Lomax]
For a loss random variable $X \sim$ Pareto$(\alpha, \kappa)$ with distribution function 
$ F(x) = 1 - [\kappa/(\kappa + x)]^{\alpha}$ for $x > 0$, scale parameter $\kappa>0$ and shape parameter $\alpha > 1$, the $\theta$-index is obtained in closed form as
\begin{equation*}\label{theta-pareto}
	\theta_{\mbox{\scriptsize Pareto}}(p \, ; \alpha) = \frac{1-p}{(\alpha-1) - \alpha(1-p)^{1/\alpha}}
\end{equation*}
and $D_X = \left(  1 - \left( \frac{\alpha-1}{\alpha}\right)^{\alpha}, 1\right)$.
\end{example}

\begin{example}[Generalized Extreme Value (GEV) distribution]
For a loss random variable $X \sim$ GEV$(\mu, \sigma, \xi)$ with $\mu$ the location parameter, $\sigma>0$ the scale parameter and $\xi$ the shape parameter, the distribution function is defined as
\begin{equation*}
	F(x) = \left\{
	\begin{array}{ll}
		\exp\left\{  -\exp\left( - \frac{x-\mu}{\sigma} \right)  \right\}, & \xi = 0\\
		\exp\left\{ -\left( 1 + \xi \frac{x - \mu}{\sigma}  \right)^{-1/\xi} \right\}, & \xi \ne 0
	\end{array}	
	\right.
\end{equation*}	
The $\theta$-index can be written in semi-closed form as
\begin{equation*}\label{theta-gev}	
	\theta_{\mbox{\scriptsize GEV}}(p \, ; \xi) = \left\{
	\begin{array}{ll}
		\frac{ \gamma \left(1-\xi, -\log(p) \right) - (1-p) \left( -\log(p) \right)^{-\xi} }{ \left( -\log(p) \right)^{-\xi} - \Gamma(1-\xi) }, & \xi \ne 0, \,\, \xi < 1\\
		\frac{ \mbox{li}(p) } { \log( -\log(p) ) + \gamma_E } - 1,
	& \xi = 0\\
	0, & \xi \geq 1,
\end{array}
\right.
\end{equation*}	
where $\gamma_{E}$ denotes the Euler's constant ($\simeq 0.5772$) and $\mbox{li}(x) := \int_0^x (\log(t))^{-1} dt$ denotes the logarithmic integral function, while the set $D_X$ depends on the shape parameter, and in particular is determined by
\begin{equation*}
D_X = \left\{ 
\begin{array}{ll}
	\left( e^{ -e^{-\gamma_E}}, 1 \right), & \xi = 0 \\
	\left(e^{-(\Gamma(1-\xi))^{-1/\xi }}, 1 \right), & \xi \in \R \setminus \{0\}
\end{array}	
\right.
\end{equation*}
\end{example}


\begin{table}[ht!]
\centering
\resizebox{\textwidth}{!}{
\begin{tabular}{c|ccc|ccc|ccc|ccc}
	\hline
	& Exp 
	& Normal
	& Uniform
	& \multicolumn{3}{c|}{Student-t} 
	& \multicolumn{3}{c|}{LogNormal} 
	& \multicolumn{3}{c}{Weibull}\\
	p & & & & \multicolumn{3}{c|}{$\nu$} &\multicolumn{3}{c|}{$\sigma$} &\multicolumn{3}{c}{$\alpha$}  \\ 
	& & &  &2 & 4 & 20 & 0.2 & 0.5 & 1.00 & 0.75 & 1.5 & 10 \\
	\hline
	0.900 & 
	0.0767&                                        
	0.0369&                                       
	0.0125&                                        
	0.1250&  0.0630& 0.0406&      
	0.0490&  0.0737& 0.1440&       
	0.1064&  0.0541& 0.0267\\        
	
	0.950 & 
	0.0250& 
	0.0127 & 
	0.0028&
	0.0555& 0.0251& 0.0144& 
	0.0170&  0.0255& 0.0478&
	0.0336& 0.0181& 0.0091\\ 
	
	0.975 & 
	0.0092&  
	0.0048& 
	0.0007&
	0.0263& 0.0110& 0.0056&  
	0.0065&  0.0099& 0.0184&
	0.0123&  0.0068& 0.0034\\
	
	0.990 & 
	0.0027&  
	0.0014&
	0.0001&
	0.0102& 0.0039& 0.0018& 
	0.0020& 0.0031& 0.0058&
	0.0036& 0.0020& 0.0010\\ 
	
	0.995 & 
	0.0011 &  
	0.0006&
	0.0000&
	0.0050& 0.0019& 0.0008& 
	0.0008& 0.0013& 0.0025&
	0.0015&  0.0008& 0.0004\\

	\hline
	\multicolumn{10}{c}{}
\end{tabular}	}

\resizebox{\textwidth}{!}{
\begin{tabular}{c|cccc|cccc|cccc}
	\hline
	& \multicolumn{4}{c|}{Gamma} 
	& \multicolumn{4}{c|}{Generalized Extreme Value}
	& \multicolumn{4}{c}{Pareto II (Lomax)}\\
	p & \multicolumn{4}{c|}{$\alpha$} 
	& \multicolumn{4}{c|}{$\xi$}
	& \multicolumn{4}{c}{$\alpha$}\\ 
	&  0.25 & 0.5 & 1.5 & 20 
	& -1 & 0 & 0.2 & 0.4
	&1.5 & 2 & 4 & 10 \\ 
	\hline
	0.900 & 
	0.1422 & 0.0989& 0.0684&  0.0500&     
	0.0059 & 0.0613& 0.0998& 0.1745 &      
	0.5655 & 0.2721 & 0.1332& 0.0946\\                       
	
	0.950 & 
	0.0398& 0.0306& 0.0227& 0.0155& 
	0.0014& 0.0212& 0.0355& 0.0620&
	0.1687 & 0.0555& 0.0164&  0.0138\\ 
	
	0.975 & 
	0.0137&  0.0110& 0.0085& 0.0059& 
	0.0003& 0.0081& 0.0142& 0.0255&
	0.0672 & 0.0366& 0.0177&  0.0120\\
	
	0.990 & 
	0.0038 & 0.0032& 0.0026& 0.0018& 
	0.0001  & 0.0025& 0.0047& 0.0088&
	0.0232 & 0.0125& 0.0058& 0.0037\\  
	
	0.995 & 
	0.0015& 0.0013 &  0.0011 &  0.0008& 
	0.0000& 0.0011 &  0.0021 &  0.0041&
	0.0109 & 0.0058& 0.0026& 0.0016\\
	
	\hline
\end{tabular}	}
\caption{$\theta$-index for loss distributions with different shape features at upper tail levels.}\label{tab-1}
\end{table}

\begin{figure}[ht!]
\centering
\includegraphics[width=2in]{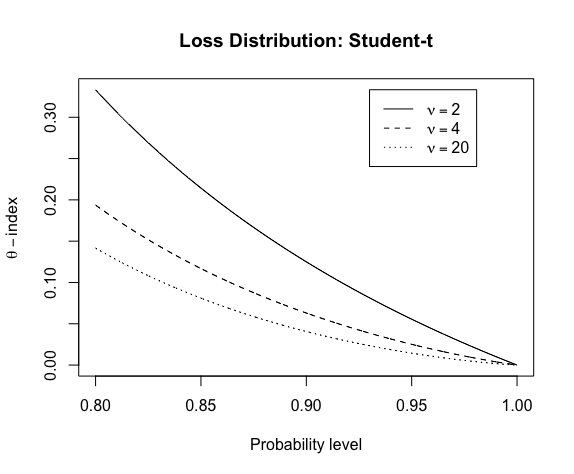}
\includegraphics[width=2in]{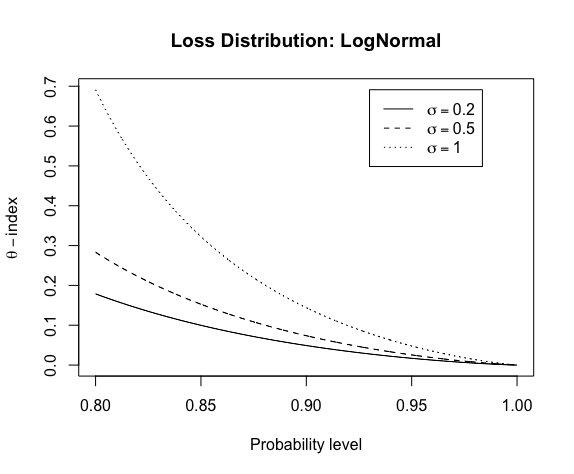}
\includegraphics[width=2in]{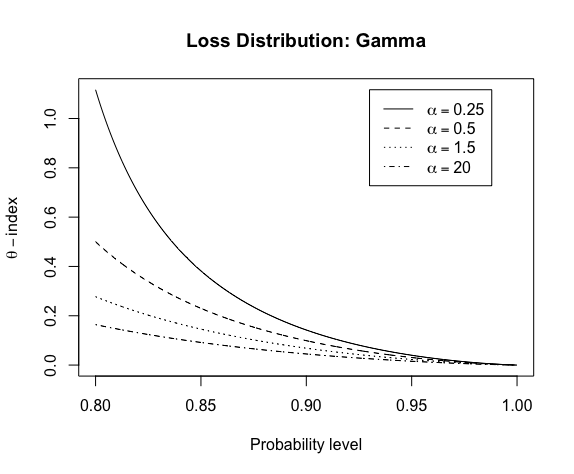}
\includegraphics[width=2in]{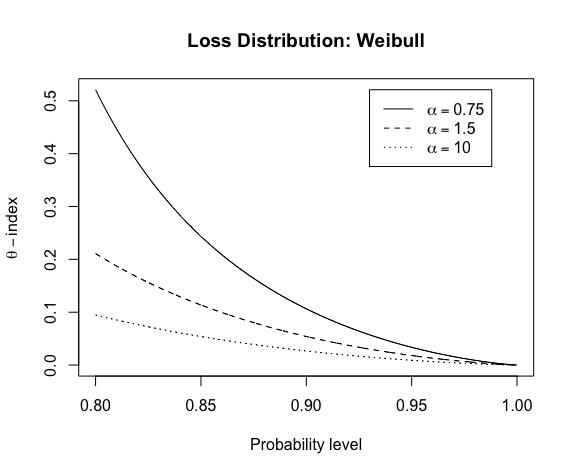}
\includegraphics[width=2in]{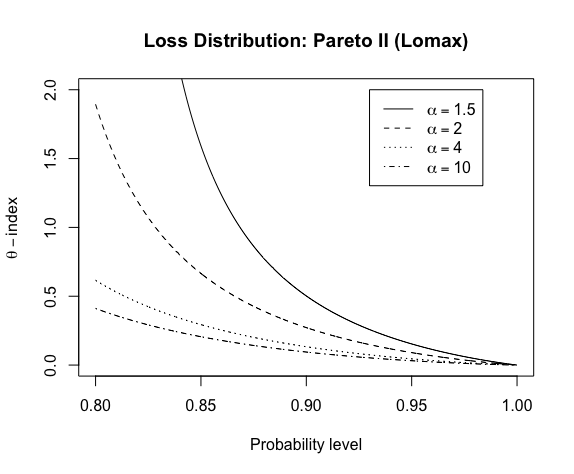}
\includegraphics[width=2in]{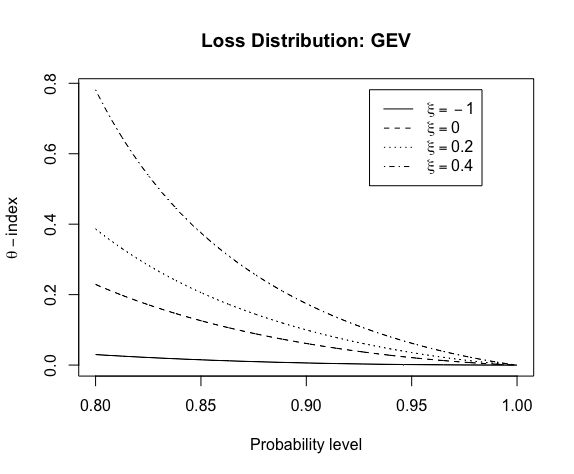}
\caption{The $\theta$-index for some shape-varying loss distributions.}\label{fig-2}
\end{figure}

In Table \ref{tab-1} are concentrated the $\theta$-index values at certain upper level points for all loss distributions considered. Moreover, in Figure \ref{fig-2} are illustrated the relevant $\theta$-index curves for $p \in (0.9,1)$. Within the same distribution family, it is clear from the reported results in the table and the plots that the shape parameter determines the ordering relation in terms of the tail shape as quantified by the $\theta$-index ($\theta$-order). For instance, for the Pareto II (Lomax) case, as the shape parameter $\alpha$ approaches to 1, we obtain higher values for $\theta$-index at any level $p$, while as $\alpha$ grows lower values are displayed, i.e. the tail decay rate becomes higher. Observe, that this behaviour remains consistent within any distribution, since the relevant shape parameter is connected through a monotone relation to $\theta$-index, and therefore parameter values that indicate potential heavy-tail behaviour will lead to higher tail risk assessment. Note that this consistency is not necessarily observed across different distribution families. 

\begin{figure}[ht!]
	\centering
	\includegraphics[width=2in]{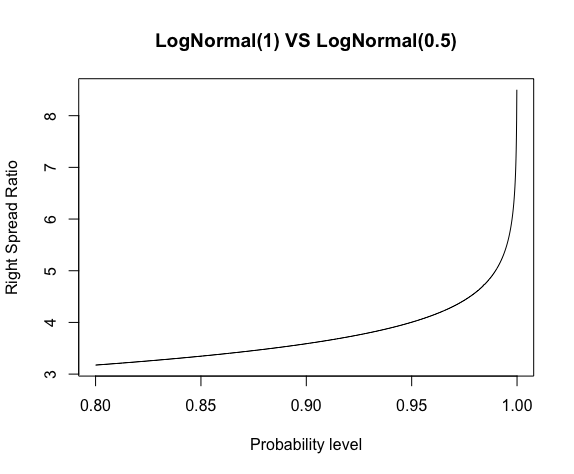}
	\includegraphics[width=2in]{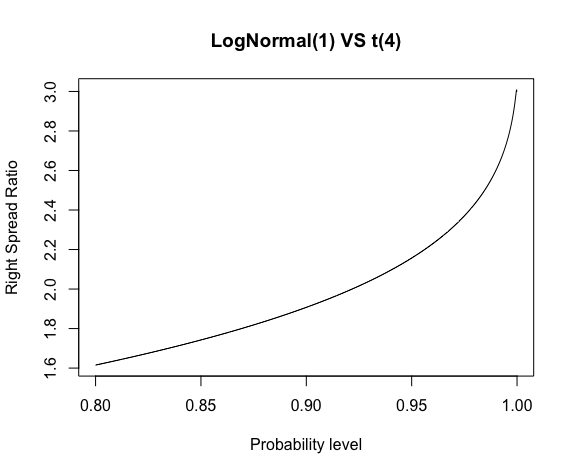}
	\includegraphics[width=2in]{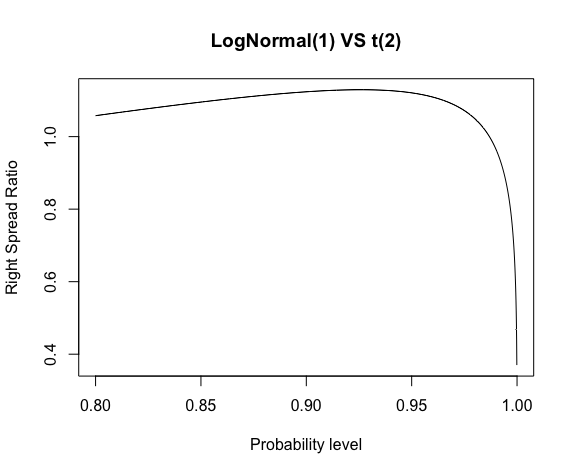}
	\caption{Illustration of the right spread ratio curve for couples of different risks}\label{fig-3}
\end{figure}

Check for instance, in Table \ref{tab-1} the LogNormal case for $\sigma = 1$ and Student-t for $\nu=2$ at levels $p=90\%$ and $p=95\%$ where the provided ordering differs. However, $\theta$-index itself, could assist standard risk measures like VaR which do not take into account the tail risk for comparison of different risk profiles. For instance, between two risks $X, Y$ for which at some level $p$ it holds that $\mbox{VaR}_p(X) = \mbox{VaR}_p(Y)$, we might consider as more dangerous $Y$ if $\theta_p(Y) > \theta_p(X)$. However this assessment provides only a local ordering of the relevant tail risks, since for a different level $p' > p$ it is not necessary that the inequality $\theta_{p'}(Y) \geq \theta_{p'}(X)$ holds. This aligns with Theorem \ref{theta-tail-order} according to which the $\theta$-ordering is satisfied if and only if the right spread ratio $(\mbox{ES}_p(Y) - \E[Y])/(\mbox{ES}_p(X) - \E[X])$ remains increasing for any $p \in D_X \cap D_Y$. In Figure \ref{fig-3} we indicatively illustrate the right spread ratio curve for some cases. First, the ratio between two members of the LogNormal family is illustrated, in which the ordering is clear from \ref{fig-2} since $\theta_{LN}(p;1) > \theta_{LN}(p;0.5)$, and consequently, the relevant right spread ratio remains increasing as indicated in Figure \ref{fig-3}. For the case where $X \sim t_4$ and $Y \sim LN(1)$, it seems that the relevant ratio remains increasing, leading to the conclusion that $X \leq_{\theta} Y$ for this case. However, this is not the case when $X \sim t_2$ and $Y \sim LN(1)$, since the right spread ratio does not remain increasing for all $p \in D_X \cap D_Y$, and therefore the order $X \leq_{\theta} Y$ does not hold.

\section{Empirical tail analysis of the Danish fire dataset}\label{Sec-7}

 In this section, we implement the proposed tail analysis framework using the dataset \emph{danishuni} from the R package \emph{CASdatasets}. The data consist of 2167 fire insurance losses collected by Copenhagen Reinsurance over the period 1980–1990, with claim amounts adjusted for inflation to 1985 price levels and expressed in millions of Danish kroner. This dataset is widely used in actuarial and extreme-value analysis and is well known for exhibiting heavy-tailed behaviour.
 
 \begin{figure}[ht!]
 	\centering
 	\includegraphics[width=2.2in]{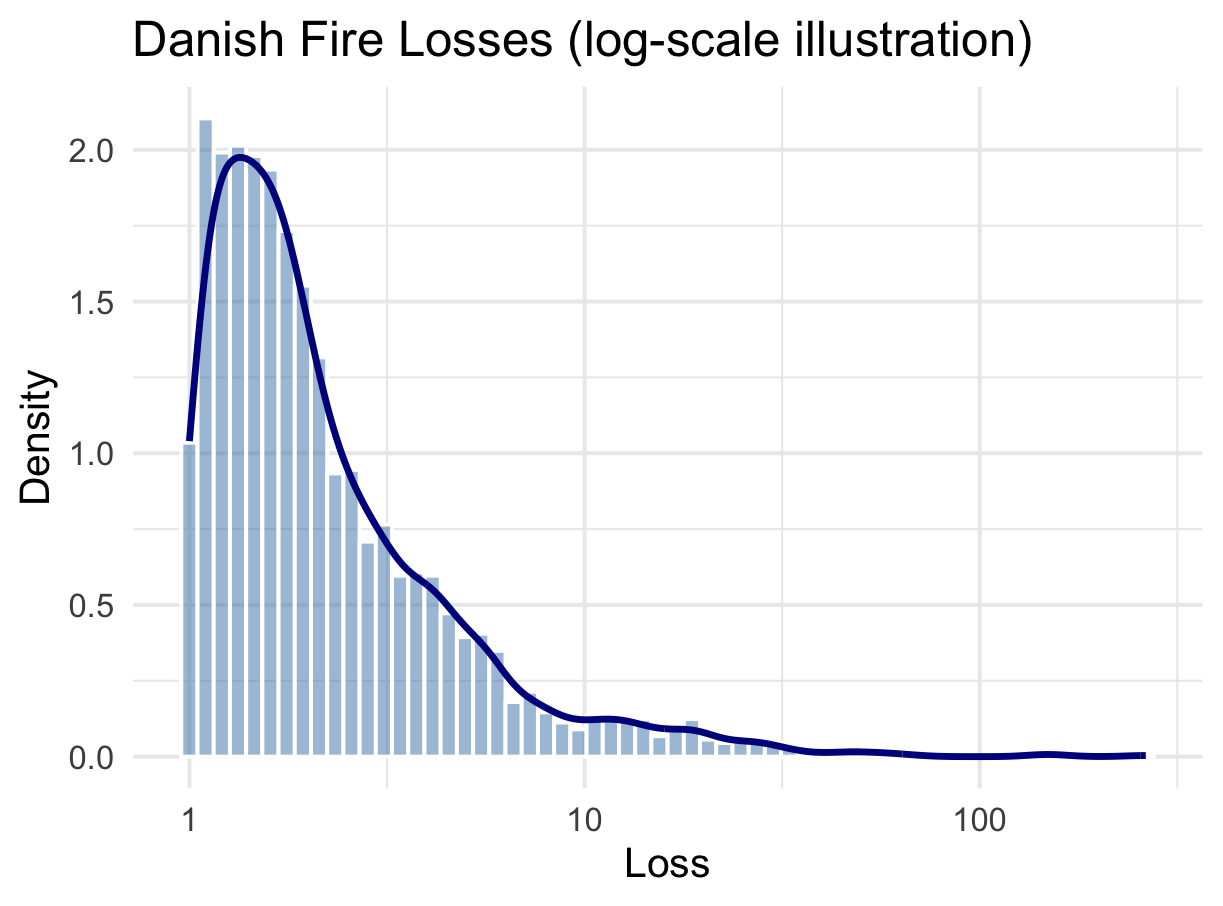}
 	\includegraphics[width=2.2in]{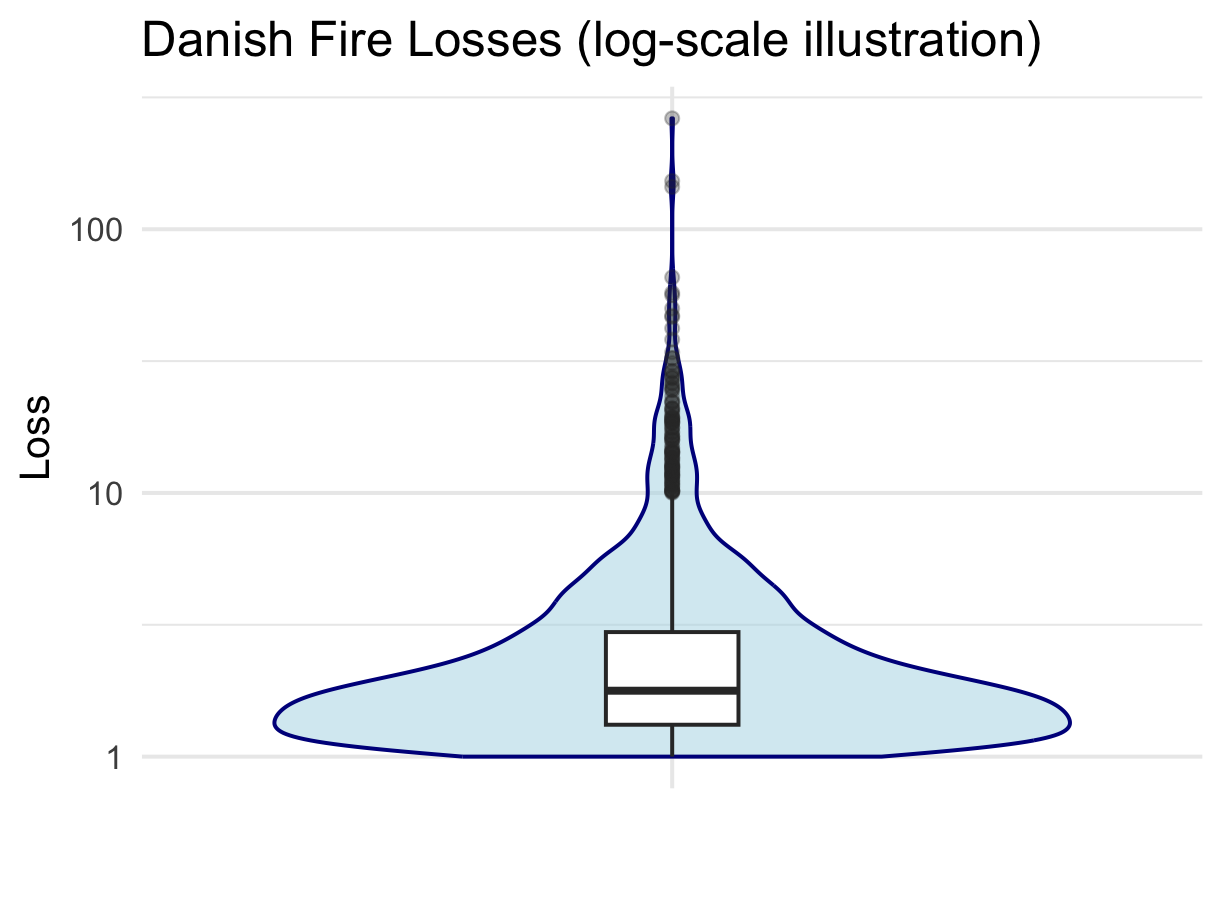}
 	\caption{Histogram with smoothed density curve (left) and violin plot combined with boxplot (right) for the log-scaled Danish fire data.}\label{fig-4}
 \end{figure}	
 
 \begin{figure}[ht!]
 	\centering	
 	\includegraphics[width=2in]{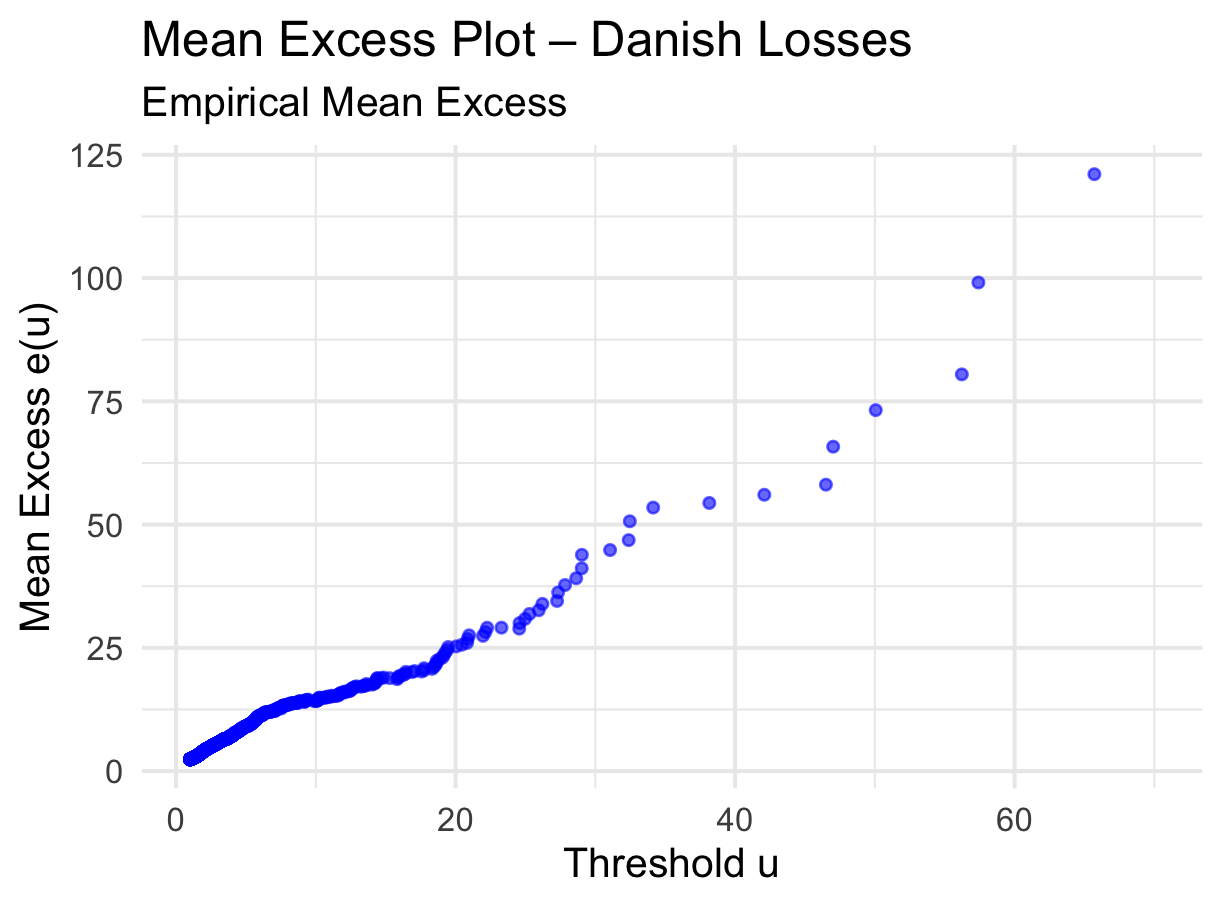}
 	\includegraphics[width=2in]{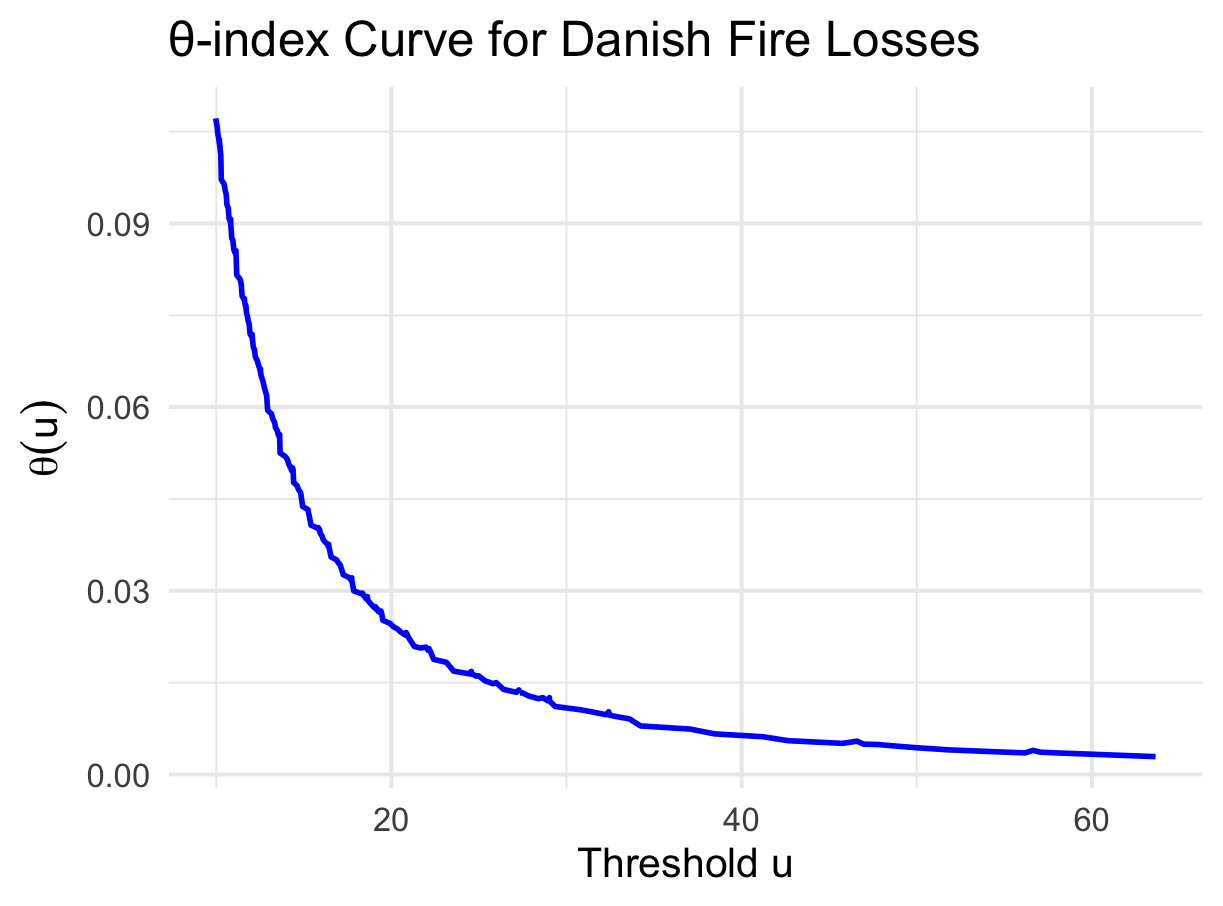}
 	\includegraphics[width=2in]{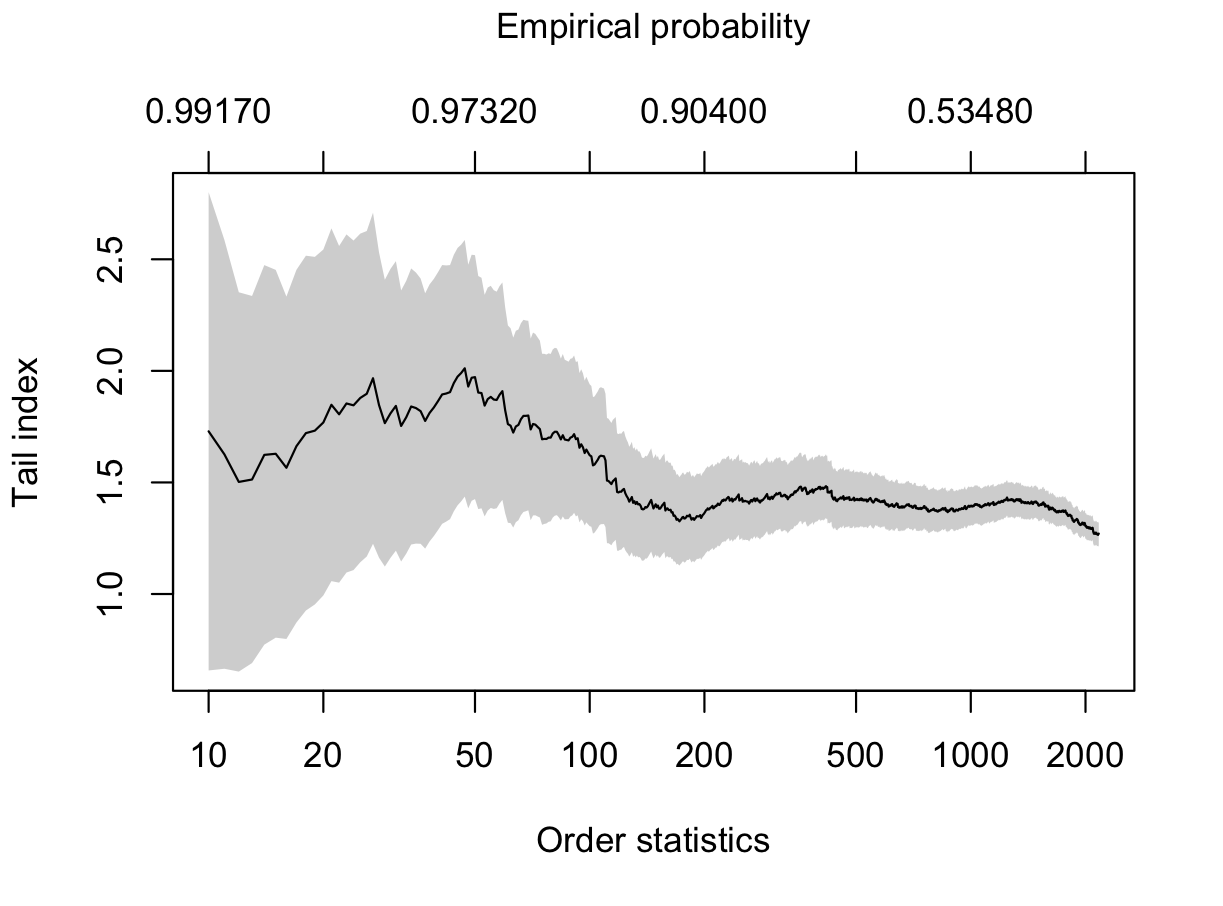}
 	\caption{Mean excess plot (left), $\theta$-index plot (middle) and Hill plot (right) for the Danish fire data.}\label{fig-5}
 \end{figure}
 
 Figure \ref{fig-4} provides a descriptive overview of the Danish fire losses on a logarithmic scale. The distribution is strongly right-skewed, with a high concentration of moderate losses and a small number of very large claims, a pattern typical of heavy-tailed insurance data. The violin and boxplot representation highlights this asymmetry and the presence of extreme observations well separated from the main body of the distribution. Tail behaviour is examined more closely in Figure \ref{fig-5}. The mean excess plot displays a clear increasing trend beyond moderate thresholds, supporting heavy-tailed behaviour. The $\theta$-index curve decreases smoothly as the threshold increases, indicating a gradual change in local tail characteristics rather than a sharp transition at a single level. The Hill plot exhibits a relatively stable region over a range of upper order statistics, suggesting a plausible window for Pareto-type tail approximation. Taken together, these diagnostics motivate a local, threshold-dependent analysis of tail behaviour, in which the $\theta$-index, its decay rate and its curvature provide additional insights into how tail characteristics evolve across different loss levels.

\begin{figure}[ht!]
	\centering
	\includegraphics[width=3in]{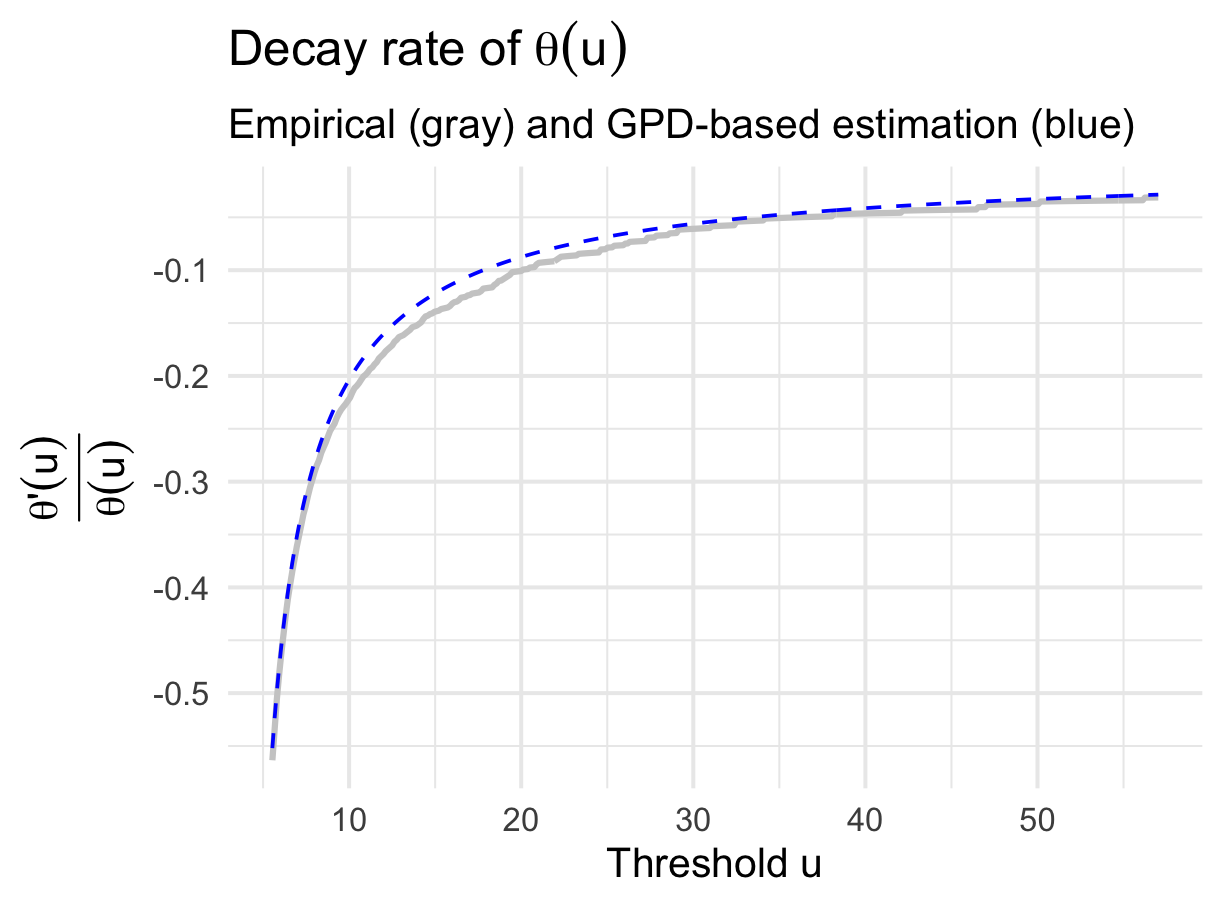}
	\includegraphics[width=3in]{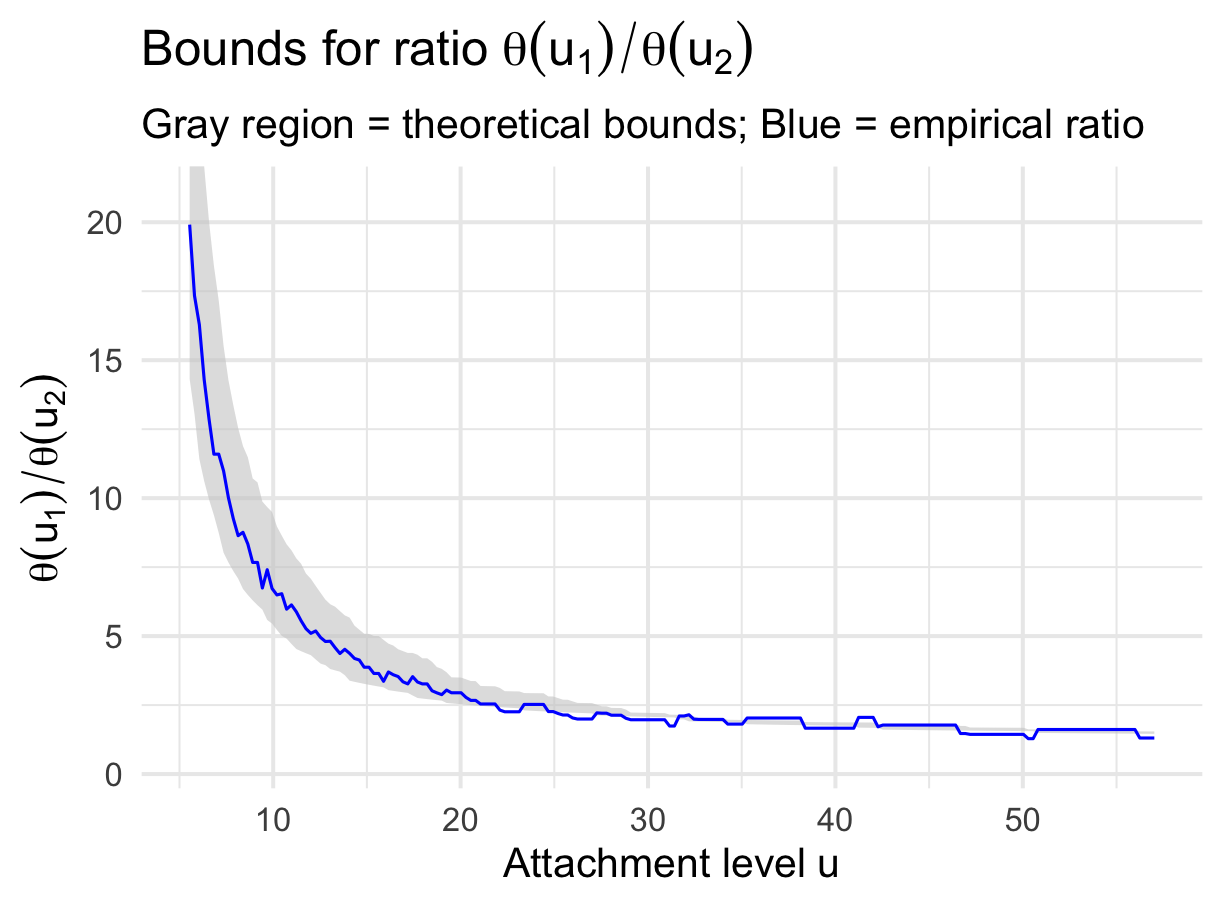}
	\caption{The decay rate of $\theta$-index (left) and the $\theta$-ratio with the theoretical bounds with respect to the attachment levels (right).}\label{fig-6}
\end{figure}

\begin{figure}[ht!]
	\centering	
	\includegraphics[width=3in]{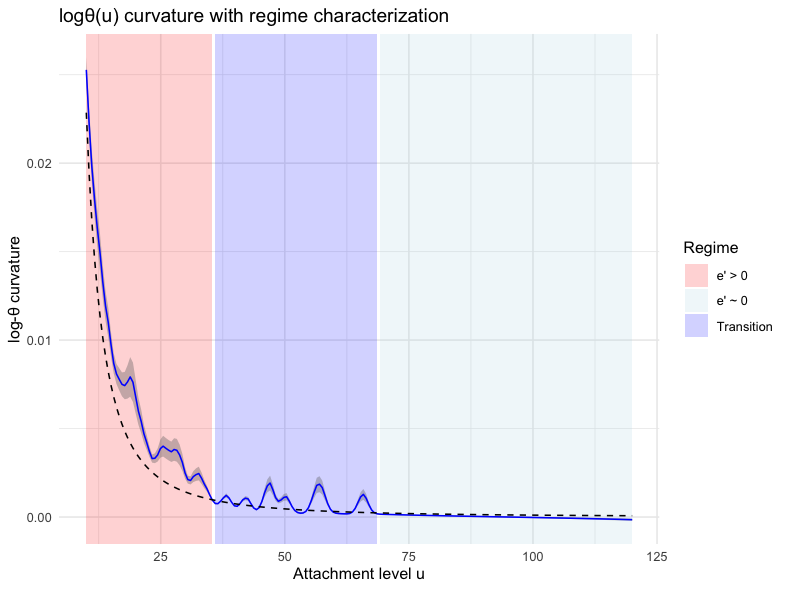}
	\includegraphics[width=3in]{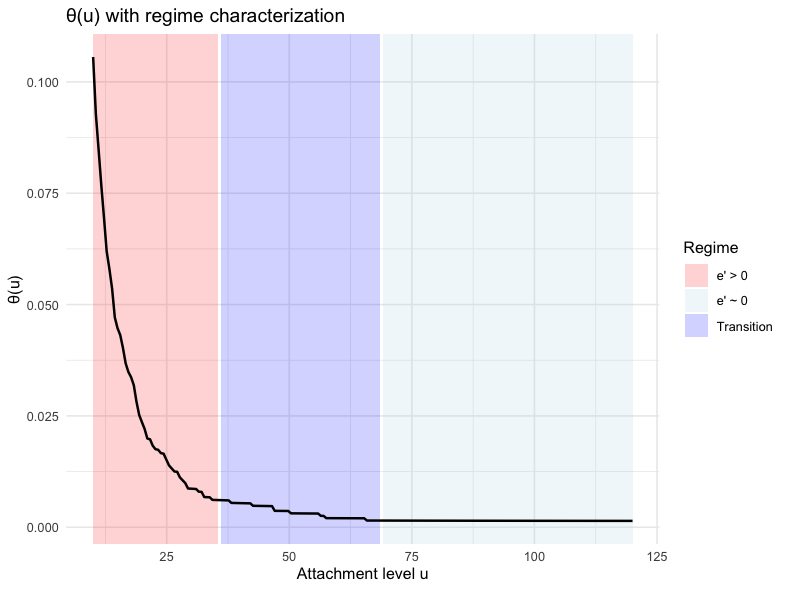}
	\includegraphics[width=3in]{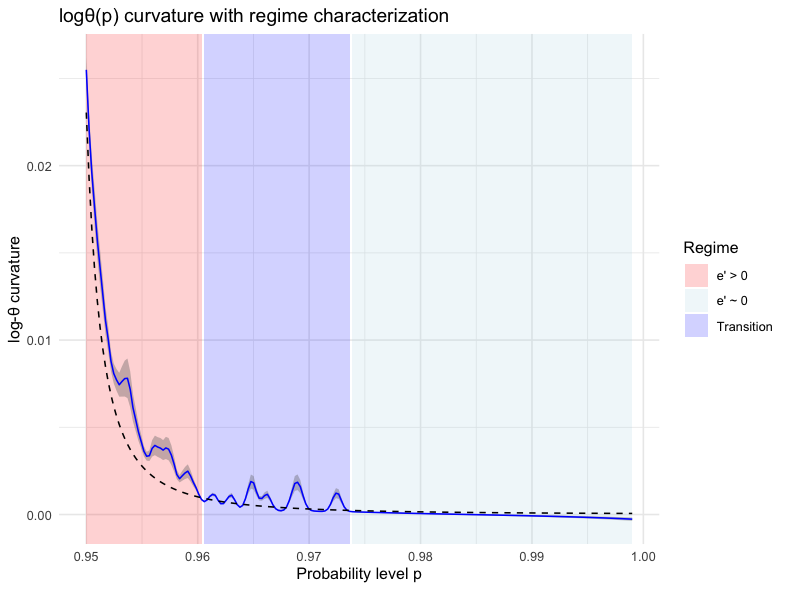}
	\includegraphics[width=3in]{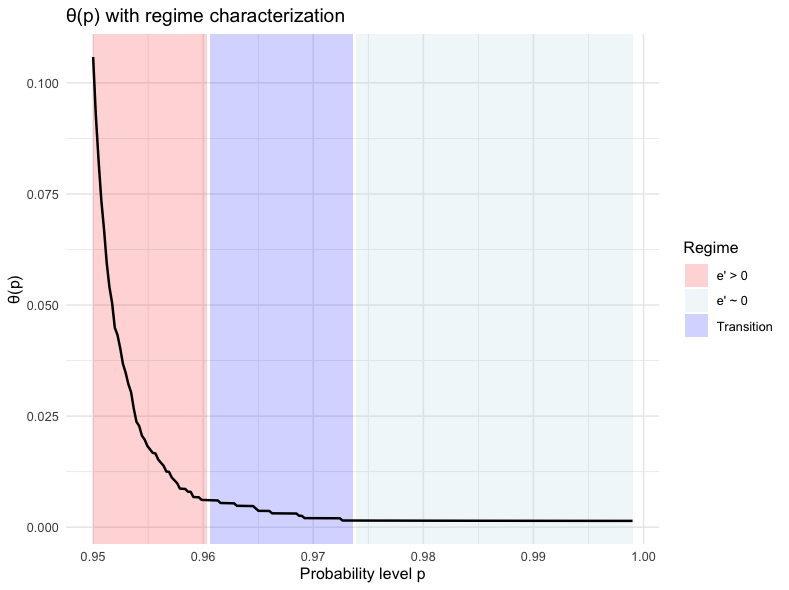}
	\caption{Illustration of the curvature of $\log \theta$ (dashed-lines correspond to the GPD-based estimates) and the $\theta$-index with tail regime distinction, shown both as functions of the threshold $u$ (upper panel) and of the corresponding probability level $p$ (lower panel). The two representations provide complementary views in terms of attachment levels and tail probabilities.}\label{fig-7}
\end{figure}

Figure \ref{fig-6} reports the empirical behaviour of the $\theta$-index and related quantities across increasing threshold levels, together with corresponding GPD-based estimates obtained from a generalized Pareto distribution fitted above a fixed high threshold (the 90\% empirical quantile). The left plot shows the decay rate of $\theta(u)$, which decreases smoothly as the threshold increases and is well approximated by the fitted GPD over a broad range of $u$, indicating a gradual stabilization of tail behaviour. The right plot compares the empirical ratio $\theta(u_1)/\theta(u_2)$ with the theoretical bounds derived in Proposition \ref{prop-10}. While small deviations from the bounds are observed at higher attachment levels, reflecting increased sampling variability, the empirical ratio remains largely consistent with the theoretical envelope over the range where a sufficient number of exceedances is available. Taken together, these plots summarise how tail behaviour varies with the threshold and place the $\theta$-based quantities in direct relation to classical EVT diagnostics.

Figure \ref{fig-7} shows how the curvature of $\log \theta$ evolves across increasing thresholds and how this behaviour is reflected in the corresponding values of $\theta$-index. At lower threshold values the curvature is clearly separated from the constant boundary ($e'_X(u)=0$), while over an intermediate range it fluctuates around this boundary before remaining close to it at higher levels. This progressive flattening of the curvature is consistent with the transition toward a near-constant regime, in the sense of Remark \ref{rmk-5}. The corresponding $\theta$-index curve, display a consistent change in slope across these ranges, with a marked decrease at lower threshold values and a progressive stabilization thereafter. The same pattern is observed when the analysis is expressed in terms of probability levels, and the most pronounced changes occur for probability levels above approximately $p \simeq 0.95$, confirming that the analysis effectively focuses on the tail of the loss distribution.

\section{Conclusions}\label{Sec-8}

This paper introduced the $\theta$-index as a level-dependent measure of tail shape derived from an equal level relationship between VaR and FES. The index provides a scale-free description of upper tail behaviour with a direct interpretation through the mean excess function. It induces a partial order on loss distributions, with comparisons characterized through the monotonicity condition of associated right tail spread ratios. By construction, the $\theta$-index parametrizes a probability equal level representation of VaR as a mixture of ES and the mean. This representation is subsequently used to derive explicit Euler allocations for aggregate risks. The threshold-based formulation further supports the analysis of tail behaviour across attachment levels. An empirical application to the Danish fire data illustrates how the $\theta$-index complements standard tail diagnostics in actuarial practice.

\subsubsection*{Author contributions} \emph{Both authors contributed equally to this manuscript.} 


\bibliographystyle{chicago}
\bibliography{references}
		
\end{document}